\newcommand{\bv}{\bm{v}}
\renewcommand{\i}{\bm{i}}
\newcommand{\bb}[1]{\mathbb{#1}}
\newcommand{\bin}[2]{\begin{pmatrix}#1\\ #2\end{pmatrix}}
\newcommand{\pd}[2]{\frac{\partial #1}{\partial #2}}
\begin{document}
\title{Arbitrary order principal directions and how to compute them}

\author{Julie Digne\thanks{LIRIS, CNRS, Univ Lyon
  (\email{julie.digne]liris.cnrs.fr})}
\and Sébastien Valette \thanks{CREATIS, CNRS, Univ Lyon  (\email{sebastien.valette@creatis.insa-lyon1.fr})} 
\and Rapha\"elle Chaine \thanks{LIRIS, CNRS, Univ Lyon  (\email{raphaelle.chaine@liris.cnrs.fr})} 
\and Yohann B\'earzi \thanks{LIRIS, CNRS, Univ Lyon  (\email{yohann.bearz@liris.cnrs.fr})} 
}
\maketitle

\begin{abstract}
Curvature principal directions on geometric surfaces are a ubiquitous concept of Geometry Processing techniques. However they only account for order $2$ differential quantities, oblivious of higher order differential behaviors. In this paper, we extend the concept of principal directions to higher orders for surfaces in $\bb R^3$ by considering symmetric differential tensors. We further show how they can be explicitly approximated on point set surfaces and that they convey valuable geometric information, that can help the analysis of 3D surfaces.
\end{abstract}

\begin{keywords}
Shape Analysis, Principal directions
\end{keywords}

\begin{figure}
\begin{center}
\includegraphics[width=0.3\textwidth]{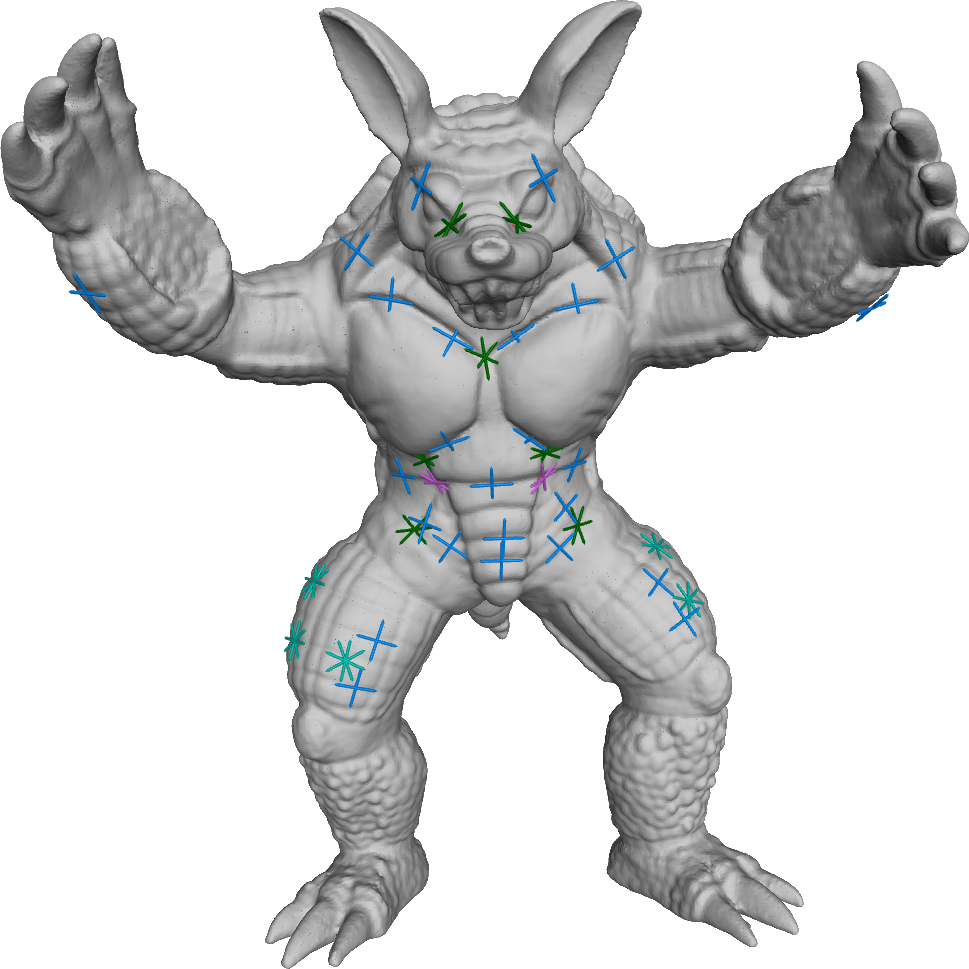}
\includegraphics[width=0.22\textwidth]{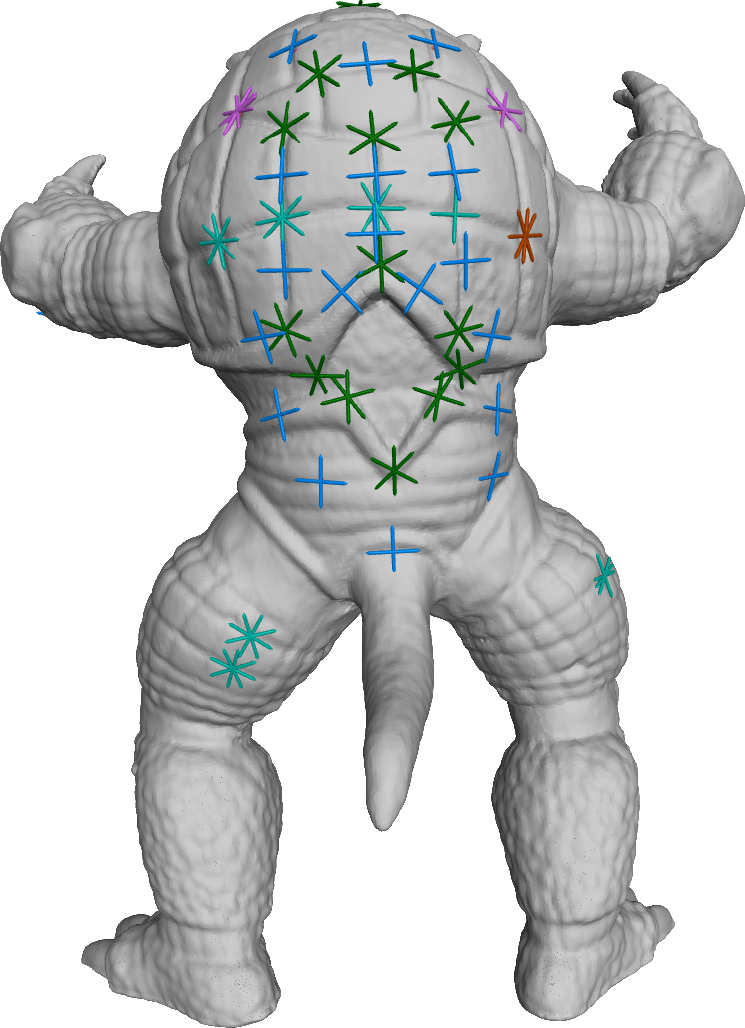}
\includegraphics[width=0.33\textwidth]{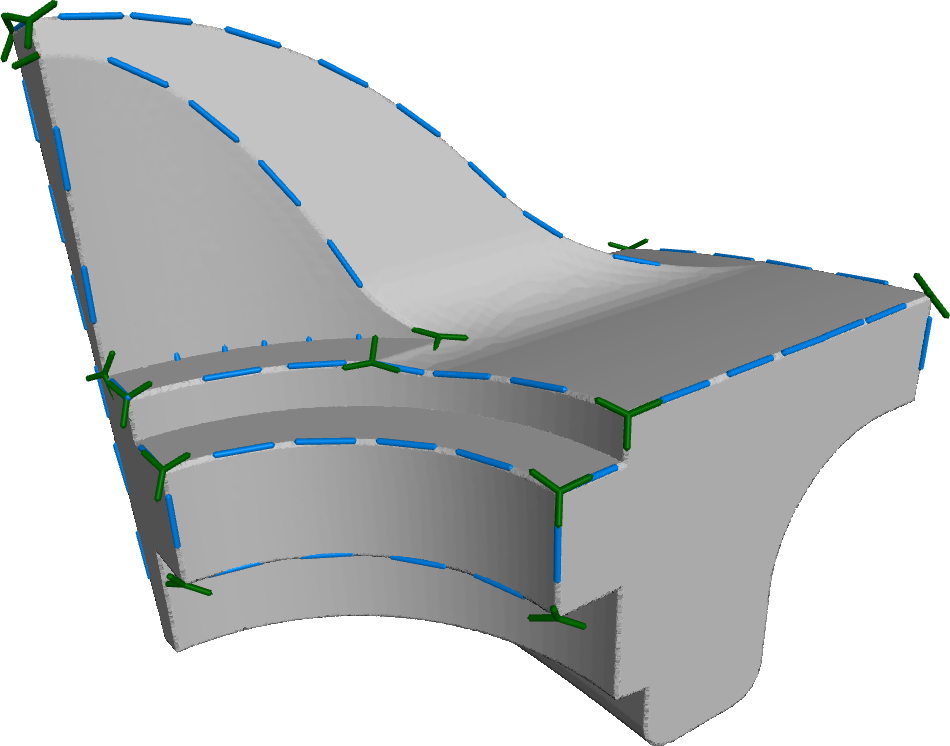}
\end{center}
\caption{Examples of principal directions of arbitrary orders on the Armadillo and Fandisk point sets. Blue: order 2; green: order 3; cyan: order 4; pink: order 5; brown: order 6.\label{fig:teaser}}
\end{figure}

\section{Introduction}

The computation of informative tangent vector quantities on surfaces is a widely studied topic. The most standard vector quantities one can consider are the principal directions that can be estimated via differential geometry tools. However these do not necessarily serve all Computer Graphics purposes: in case of umbilical surface parts, or at monkey saddles, this vector field becomes locally irrelevant.
Further, it only accounts for an \emph{edge} like structure of the shape which is limited.
Our approach considers per point vector quantity estimation, adopting a differential analysis point of view. We strive to go beyond the differential order 2 usually considered when analysing surfaces and extend the definition of principal directions to higher orders. We show experimentally why this definition, beyond its theoretical interest is  appealing for surface analysis.

To summarize, our contributions are as follows:
\begin{itemize}
 \item The mathematical definition of arbitrary order principal directions and their link with symmetric tensor eigenvalues.
 \item A theoretical analysis of their properties.
 \item A practical way of computing the directions on a sampled surface.
\end{itemize}

\section{Related Work}
In this section we review recent works on tangent vector quantities that can be set or estimated on a meshed or sampled surface be they guided by differential properties or designed by a global optimization process with user-prescribed constraints. 

\paragraph*{Differential quantities estimation}

Estimating differential quantities on surfaces has been at the core of Geometry Processing Research. However, surface analysis restricts very often to order 1 and 2 differential properties and has seldom tackled higher order properties.
Among order 2 quantities, the most famous one may be the Laplace-Beltrami operator, whose design has gathered a lot of works both from a theoretical analysis (e.g. \cite{Meyer02,Wardetzky07}) and practical analysis through the Manifold Harmonics Basis (e.g. \cite{Vallet08}). Related to the Laplace-Beltrami operator are the principal curvatures and curvature directions (or equivalently the curvature tensor) estimation, either on a point set surface~\cite{Kalogerakis07,Kalogerakis09} or on meshes~\cite{Cohen-Steiner03}, with applications to curvature lines tracking among others.

It is however possible to get access to high order derivatives of the local surface using local regression, in the context of Moving Least Squares~\cite{Levin98}. Among those methods, Osculating Jets~\cite{Cazals05} express the surface locally as a truncated Taylor expansion wrt to a local planar parametrization. The coefficients can be estimated through a linear system solve and give then a direct access to high order cross derivatives. Interestingly, this approach proved that the error on order $k$ differential quantities estimation in a neighborhood of radius $r$ using a Taylor expansion of order $K$ was in $O(r^{K-k})$. In other words, and quite counter-intuitively, to increase the accuracy of an order 2 estimation, one should still consider a large Taylor expansion order. Several other basis have been proposed following this trend, including the Wavejets basis~\cite{Bearzi18} which is less sensitive to the choice of the local reference frame in the parametrization plane. When the surface is described as a point set, the regression relies on Iteratively Reweighted Least Squares in the presence of noise and/or outliers.
Going further than order 2 Rusinkiewicz~\cite{Rusinkiewicz04} introduces a way to compute curvatures and curvature derivatives but does not go beyond this order.

All these methods essentially perform per point estimation and do not take into account any global regularization constraints. For example, on planar or spherical surfaces curvature directions are erratic in the absence of smoothness constraints, which is required by many computer graphics applications. Hence researchers have turned to the design of consistent vector fields more suited to some application purposes.

\paragraph*{Tangent Vector Field Design}
The problem of tangent vector field design is to compute a smooth tangent vector field with user prescribed constraints at given points of the surface, while optimizing some regularity criterion.
We only review some of the seminal papers of this field and we refer the reader to \cite{Vaxman16} for an extensive review. Many vector field design methods focus on smooth $N$-symmetric vector fields, also known as rotationally-symmetric direction fields (N-RoSy). N-symmetry directions are defined as the sets of directions invariant by $2\pi/N$ rotations. Ray et al.~\cite{Ray08} generalize the notion of singularity and singularity index to these fields, and provide a way of controlling singularities on surface meshes. Lai et al.~\cite{Lai10} focus on casting the vector field design as a Riemannian metric design problem. Further smoothness constraints~\cite{Crane10,Knoppel13} and global symmetry enforcing constraints~\cite{Panozzo12} were proposed. N-RoSy were also extended to non necessarily orthogonal nor rotationally symmetric vector fields \cite{Diamanti14} with appropriate differential operators \cite{Diamanti15} and application to Chebyshev nets computation \cite{Sageman-Furnas19}. 

The generalization of the Laplace-Beltrami operator to tangential vector fields and the subspace raised by its eigenvectors up to a given order may be used for regularizing vector field design \cite{Brandt17}.
Following this development of subspace methods for tangent field design~\cite{Brandt18}, Nasikun et al.~\cite{Nasikun20} consider tangent vector field design and processing via locally-supported tangential fields leading to fast approximation and design algorithms.

\paragraph*{Application of Tangent Vector Fields}
Applications of tangential vector fields range from texture mapping and texture synthesis on surfaces (e.g. \cite{Wei01,Knoppel15}) to fluid simulation (e.g. \cite{Azencot15}). Shape reconstruction and quad-meshing have been tackled by combining a position field and a N-RoSy \cite{Jakob15} yielding an extremely fast interactive algorithm.

In this paper we are also interested in computing per point sets of directions that are neither necessarily orthogonal nor rotationally symmetric but these directions stem from arbitrary order differential properties of the surface. Hence smoothness is obtained by continuity of the underlying mathematical surface.

\section{Arbitrary order symmetric tensor}

Our work makes extensive use of symmetric tensors and the theory developed by Qi for their spectral analysis~\cite{Qi05,Qi06,Qi07}. 

\begin{definition}
An $m$-dimensional symmetric tensor $T$ of order $k$ is a $k$-dimensional array such that given a set of indices $I=(i_j)_{j\in\llbracket0,k\rrbracket}$ with $1 \leq i_j \leq m$, for any permutation $p$ on $I$, $T_I=T_{p(I)}$ 
\label{def:symmetry}
\end{definition}

Notice that in Qi's work, a distinction is made between a tensor and a supermatrix, that is a tensor's representation in a given basis. For clarity sake, here we rather  use the tensor term for both the object and its representation in a basis.

From now on, we will always consider $m=2$ since we are interested in tensors of differential properties related to surfaces of dimension 2. In this case, a symmetric tensor of order $k$ can be seen as a $k-1$-dimensional array of vectors of length 2.
By convention, we define any vector to be a symmetric tensor of order 1.
Given a vector $\bm{v} = (x,y)^T$, we note $\bm{v}^k$ the symmetric tensor of order $k$ generated by multiplying $\bm{v}$ $k$ times using the Cartesian product. In particular, we set $\bm{v}^0=1$ by convention.

Multiplying a symmetric tensor by a vector $\bm{v}$ produces a symmetric tensor of order lowered by $1$. Let $T$ be a symmetric tensor of order $k$, it is composed of two symmetric tensors of order $k-1$ $T_{\bm{x}}$ and  $T_{\bm{y}}$ and can be written $T=(T_{\bm{x}}, T_{\bm{y}})$. Then $T\bm{v}$ is the symmetric tensor of order $k-1$ such that :
\begin{equation}
T\bm{v} = x T_{\bm{x}} + y T_{\bm{y}}
\end{equation}

The product $T\bm{v}$ reduces the order of $T$ by contracting on an arbitrary index. Since $T$ is symmetric, any index used for the contraction yields the same list of numbers in $T\bm{v}$.
While eigendecomposition of matrices is a well understood theory with important applications in Geometry Processing among others, the generalization to arbitrary order tensors is highly nontrivial. Qi introduced a new definition extending eigenvalues and eigenvectors to symmetric tensors~\cite{Qi05,Qi06,Qi07}, as follows:

\begin{definition}{E-eigenvalues \cite{Qi05}}
 Given $T$ a symmetric tensor of order $k$, if there exists
$\lambda\in\mathbb{C}$ and a vector $\bm{v}\in\mathbb R^2$ such that:
\begin{equation}
\left\{
\begin{array}{ccc}
T \bm{v}^{k-1}& =& \lambda \bm{v}\\
\bm{v}^T\bm{v} &=& 1
\end{array}
\right.
\label{eq:eigen_tensor}
\end{equation}
Then $\lambda$ is called an $E$-eigenvalue of $T$ and $\bm{v}$ is called an $E$-eigenvector of $T$.
The set of $\lambda$ satisfying (\ref{eq:eigen_tensor}) are the roots of a polynomial called the $E$-characteristic polynomial.
\end{definition}

\section{Arbitrary order principal directions}

\paragraph{Differential tensor}
Tensors can be used to write Taylor expansions. As an example, one can write the two first terms of a bivariate Taylor expansion. Given $\bm{v} = (x,y)^T$ an arbitrary vector, $\bm{n}$ the normal at $(0,0)$ and $H$ the Hessian of a function defined on $\bb R^2$ with values in $\bb R$ and twice continuously differentiable at $(0,0)$:
\begin{equation}
f(\bm{v}) = f(0,0) + \bm{n}^T\bm{v} + \frac12\bm{v}^TH\bm{v} +o(\|\bv\|^2)
\end{equation}

Note that $H$ is symmetric, and so is $\bm{n}$ since its order is 1. This expression can be generalized to higher orders Taylor expansions using tensors.

The Taylor expansion of order $K$ of a function $f$ from $\bb R^2$ to $\bb R$ is:
\begin{equation}
 f(x,y) = \sum_{k=0}^K\sum_{i=0}^k \frac1{i!(k-i)!}\frac{\partial^k f }{\partial x^i\partial y^{k-i}}(0,0) x^i y^{k-i} +o(\|(x,y)\|^K)
 \label{eq:taylor}
\end{equation}

The differential tensor is now defined as:
\begin{definition}
For a function $f$ defined on $\bb R^2$ with values in $\bb R$, $k$ times differentiable, the $k^{th}$ order differential tensor $T_k$ at $(0,0)$ is a $k^{th}$ order 2-dimensional tensor whose coefficients are:
\begin{equation}
(T_k)_{i_1\cdots i_k} = \frac{\partial^k f}{\partial x_{i_1}\cdots \partial x_{i_k}}(0,0)  
\end{equation}
with $i_j\in \{1,2\}$, for $j\in\{1\cdots k\}$, and $x_1=x,x_2=y$
\label{def:difftensor}
\end{definition}

If $f$ is differentiable then the order in which the differentiation is done is irrelevant and thus $T_k$ is symmetric.
Using Definition \ref{def:difftensor} and $\bv=(x,y)^T$, we have:

\begin{equation}
T_k\bv^k = \sum_{i=0}^k\bin{k}{i} \frac{\partial^k f}{\partial^i x\partial^{k-i}y}(0,0)x^iy^{k-i}
\end{equation}

and:

\begin{equation}
 \frac1{k!}T_k\bv^k = \sum_{i=0}^k\frac{1}{i!(k-i)!} \frac{\partial^k f}{\partial^i x\partial^{k-i}y}(0,0)x^iy^{k-i}
 \label{eq:tensoreq}
\end{equation}

Hence using Equation \ref{eq:taylor}, we get the Taylor formula for a $K$ times differentiable function:
\begin{equation}
 f(\bv) = \sum_{k=0}^K \frac1{k!}T_k\bv^k +o(\|\bv\|^K)
 \label{eq:taylorT}
\end{equation}

The following Lemma shows that the gradient of each of the terms involved in the Taylor expansion can be obtained by contracting the corresponding tensor $k-1$ times, i.e. one time less than in the  expansion. This will then allow us to search for extrema at different orders.
\begin{lemma}
Let $T$ be a $2$-dimensional symmetric tensor. Let $\bm{v}=(x,y)^T\in\mathbb R^2$ be a vector.
\begin{equation}
\begin{split}
\frac{\partial T\bm{v}^k}{\partial \bm{v}}  &= k T\bm{v}^{k-1}\\
\end{split}
\end{equation}
\label{lemma:deriv_tensor}
\end{lemma}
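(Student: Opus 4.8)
The plan is to prove the identity by expanding $T\bm{v}^k$ as an explicit polynomial in $x$ and $y$, differentiating term by term, and recognizing the result as $kT\bm{v}^{k-1}$. Since $T$ is a $2$-dimensional symmetric tensor of order $k$, the contraction $T\bm{v}^k$ is a scalar, homogeneous of degree $k$ in $(x,y)$, and by symmetry collecting equal terms gives
\begin{equation}
T\bm{v}^k = \sum_{i=0}^k \binom{k}{i} T_{\underbrace{\scriptstyle 1\cdots 1}_{i}\underbrace{\scriptstyle 2\cdots 2}_{k-i}}\, x^i y^{k-i},
\end{equation}
where I write $T_{1\cdots1 2\cdots 2}$ for the common value of all coefficients with $i$ indices equal to $1$ and $k-i$ indices equal to $2$. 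The strategy is purely computational: take partial derivatives of this homogeneous polynomial with respect to $x$ and with respect to $y$, assemble the gradient as a vector, and match it against the analogous expansion for $T\bm{v}^{k-1}$.

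First I would compute $\partial(T\bm{v}^k)/\partial x$ and $\partial(T\bm{v}^k)/\partial y$ from the displayed sum. Differentiating in $x$ kills the $i=0$ term and shifts the index, giving $\sum_{i=1}^{k}\binom{k}{i} i\, T_{1\cdots1 2\cdots2}\, x^{i-1}y^{k-i}$; reindexing with $j=i-1$ and using the identity $\binom{k}{i}i = k\binom{k-1}{j}$ yields $k\sum_{j=0}^{k-1}\binom{k-1}{j} T_{\scriptscriptstyle 1 \underbrace{\scriptstyle 1\cdots1}_{j}\underbrace{\scriptstyle 2\cdots2}_{k-1-j}}\, x^{j}y^{k-1-j}$, which is exactly $k$ times the first component of $T\bm{v}^{k-1}$ (the component one reads off by contracting the remaining order-$(k-1)$ tensor, whose leading index is fixed to $1$). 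The $y$-derivative is symmetric: it produces $k$ times the component of $T\bm{v}^{k-1}$ corresponding to the index fixed to $2$. Stacking the two gives $\partial(T\bm{v}^k)/\partial\bm{v} = k\,T\bm{v}^{k-1}$.

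An alternative, slightly slicker route is induction on $k$ together with the product-rule observation that $T\bm{v}^k = (T\bm{v})\bm{v}^{k-1}$ where $T\bm{v}$ is again a symmetric tensor; one treats the inner $T\bm{v}$ (linear in $\bm{v}$, hence with constant "derivative" $T$ in the appropriate sense) and the outer factor $\bm{v}^{k-1}$ separately via a Leibniz rule for tensor contractions. However, this requires setting up a product rule for the contraction operation, which is essentially the same bookkeeping as the direct computation, so I would favor the explicit polynomial argument for transparency. Either way there is no real obstacle here — the only thing to be careful about is the combinatorial identity $i\binom{k}{i} = k\binom{k-1}{i-1}$ and the symmetry bookkeeping that lets one pass freely between "the coefficient indexed by a multiset of $1$s and $2$s" and "contracting against $\bm{v}$," so that the derivative of the scalar $T\bm{v}^k$ in $x$ (resp. $y$) is correctly identified with the first (resp. second) entry of the vector $T\bm{v}^{k-1}$. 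The mild subtlety worth a sentence in the writeup is that the factor $k$ arises because each of the $k$ slots of $T$ contributes one term under differentiation, and symmetry makes all these contributions equal.
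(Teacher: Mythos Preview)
Your direct polynomial expansion is correct, and the bookkeeping with $i\binom{k}{i}=k\binom{k-1}{i-1}$ is exactly what is needed to identify the $x$- and $y$-derivatives with the two components of $T\bm{v}^{k-1}$. The paper, however, takes precisely the route you list as your ``alternative'': it checks the base case $k=1$ by writing $T\bm{v}=xT_x+yT_y$ and then inducts via the Leibniz-type identity
\[
\frac{\partial (T\bm{v}^{k}\,\bm{v})}{\partial \bm{v}} = T\bm{v}^{k} + \Bigl(\frac{\partial T\bm{v}^{k}}{\partial \bm{v}}\Bigr)\bm{v},
\]
so that $(k+1)T\bm{v}^{k}$ drops out immediately. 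Your approach has the advantage of making the combinatorics explicit and of not relying on an unjustified product rule for contractions (the step you yourself flagged as ``essentially the same bookkeeping''); the paper's induction is terser but leaves that product-rule step unproved. Either argument is fine, and your remark that the factor $k$ reflects the $k$ symmetric slots each contributing one differentiation is exactly the right intuition behind both.
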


\begin{proof}
For $k=1$, $\bm{v}^1=\bm{v}=(x, y)^T$ and:
\begin{align}
\frac{\partial T\bv^1}{\partial \bv} = (\frac{\partial xT_x+yT_y}{\partial x}, \frac{\partial xT_x+yT_y}{\partial y}) = (T_x,T_y) = T  
\end{align}

Assume that for $k$, we have $\frac{\partial T\bm{v}^k}{\partial \bm{v}} = k T\bm{v}^{k-1}$, then:
\begin{equation}
\begin{split}
\frac{\partial T\bm{v}^{k+1}}{\partial\bm{v}} &= \frac{\partial T\bm{v}^k\bm{v}}{\partial\bm{v}}\\
&= T\bm{v}^k + \frac{\partial T\bm{v}^k}{\partial\bm{v}}\bm{v}\\
&= T\bm{v}^k + kT\bm{v}^{k-1}\bm{v}\\
&= (k+1)T\bm{v}^k\\
\end{split}
\end{equation}

By induction, the property is true for all $k\geq 1$.

\end{proof}

\begin{theorem}
Given a $K$ times continuously differentiable function $f$ and $1< k \leq K$, $T_k$ is the real symmetric $k^{th}$ order differential tensor of $f$ and the set of vectors $\bm{v}=(r\cos\theta,r\sin\theta)^T$ such that $\frac{\partial}{\partial\theta} T_k\bm{v}^k=0$ and $\|\bm{v}\|=1$ are real $E$-eigenvectors of $T_k$, i.e.:
\begin{equation}
\left\{
\begin{array}{l}
T_k\bm{v}^{k-1} = \bm{v}T_k\bm{v}^k\\
\|\bm{v}\| = 1
\end{array}
\right.
\end{equation}
\label{th:principal_directions}
\end{theorem}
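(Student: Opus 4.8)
The plan is to regard the map $\theta\mapsto T_k\bv^k$, where $\bv=\bv(\theta):=(\cos\theta,\sin\theta)^T$, as a smooth scalar function on the unit circle — the constraint $\|\bv\|=1$ fixes $r=\pm1$, which only reparametrizes the circle — and to prove that its stationary points in $\theta$ are exactly the real $E$-eigenvectors of $T_k$. First I would differentiate with respect to $\theta$ by the chain rule, $\frac{\partial}{\partial\theta}T_k\bv^k=\langle\frac{\partial T_k\bv^k}{\partial\bv},\frac{d\bv}{d\theta}\rangle$. By Lemma~\ref{lemma:deriv_tensor}, $\frac{\partial T_k\bv^k}{\partial\bv}=kT_k\bv^{k-1}$, and $\frac{d\bv}{d\theta}=(-\sin\theta,\cos\theta)^T=:\bv^{\perp}$, so
\begin{equation*}
\frac{\partial}{\partial\theta}T_k\bv^k=k\,\langle T_k\bv^{k-1},\bv^{\perp}\rangle.
\end{equation*}

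Next I would exploit that, in dimension $2$, $\{\bv,\bv^{\perp}\}$ is an orthonormal basis of $\mathbb R^2$. Thus $\frac{\partial}{\partial\theta}T_k\bv^k=0$ holds iff the vector $T_k\bv^{k-1}$ has no component along $\bv^{\perp}$, i.e.\ iff $T_k\bv^{k-1}=\lambda\bv$ for some scalar $\lambda$; because $T_k$ and $\bv$ are real, $\lambda\in\mathbb R$. Contracting this identity once more with $\bv$ and using $\bv^T\bv=1$ together with $(T_k\bv^{k-1})\bv=T_k\bv^k$ identifies $\lambda=T_k\bv^k$, which is precisely the asserted system $T_k\bv^{k-1}=\bv\,T_k\bv^k$, $\|\bv\|=1$, exhibiting $\bv$ as a real $E$-eigenvector with eigenvalue $T_k\bv^k$. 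The converse is immediate: if $T_k\bv^{k-1}=\bv\,T_k\bv^k$ and $\|\bv\|=1$, then $\langle T_k\bv^{k-1},\bv^{\perp}\rangle=(T_k\bv^k)\,\langle\bv,\bv^{\perp}\rangle=0$, so $\bv$ is stationary.

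The computation itself is short, so the only delicate points are the two elementary facts it rests on, both special to $m=2$ and to the tensor--vector contraction: that orthogonality to $\bv^{\perp}$ forces collinearity with $\bv$, and that a further contraction of $T_k\bv^{k-1}=\lambda\bv$ with $\bv$ returns $T_k\bv^k$ on the left-hand side. I expect the main (mild) obstacle to be the chain-rule step — making precise that $T_k\bv^k$ is a genuine differentiable function of the two Cartesian coordinates of $\bv$, so that Lemma~\ref{lemma:deriv_tensor} may be applied before composing with $\theta\mapsto\bv(\theta)$. Once that is granted, everything else is bookkeeping.
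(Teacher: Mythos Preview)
Your argument is correct. Both your proof and the paper's rest on Lemma~\ref{lemma:deriv_tensor} and on the identity $\partial_\theta\bv=\bv^{\perp}$, but the executions differ. The paper works with a general radius $r$, first establishes the Euler relation $x\,\partial_x(T_k\bv^k)+y\,\partial_y(T_k\bv^k)=kT_k\bv^k$ by differentiating the explicit polar expansion $T_k\bv^k=k!\sum_i a_{k,i}r^k\cos^i\theta\sin^{k-i}\theta$ in $r$, pairs it with the stationarity equation $-y\,\partial_x(T_k\bv^k)+x\,\partial_y(T_k\bv^k)=0$, solves the resulting $2\times2$ linear system for the Cartesian gradient, and only then invokes Lemma~\ref{lemma:deriv_tensor} and sets $\|\bv\|=1$. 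You instead restrict to the unit circle from the start, apply Lemma~\ref{lemma:deriv_tensor} via the chain rule to obtain $\partial_\theta(T_k\bv^k)=k\langle T_k\bv^{k-1},\bv^{\perp}\rangle$ directly, and use the orthonormal basis $\{\bv,\bv^{\perp}\}$ to read off collinearity; the eigenvalue is then recovered by one more contraction. Your route is shorter and avoids both the explicit polar expansion and the linear-system manipulation; the paper's route has the minor byproduct of displaying the general-$r$ relation $T_k\bv^{k-1}=(T_k\bv^k/\|\bv\|^2)\bv$ before specializing, which it later uses to identify the eigenvalue as $g_k(\theta_e)=\lambda_e/k!$.
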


\begin{proof}
First one can notice using equation \ref{eq:tensoreq}, with $\bv=(r\cos\theta,r\sin\theta)^T$ that:

\begin{equation}
 \frac1{k!}T_k\bv = \sum_{i=0}^k a_{k,i} r^k \cos^i\theta\sin^{k-i}\theta
 \label{eq:tensorcartesian}
\end{equation}

where $a_{k,i}=\frac1{i!(k-i)!}\frac{\partial^k f}{\partial x^i\partial y^{k-i}}(0,0)$. 

Differentiating $T_k\bv^k$ w.r.t. radius $r$ gives:

\begin{equation}
\begin{split}
\frac{\partial}{\partial r}T_k\bm{v}^k &= \frac{\partial}{\partial r} k!\sum_{i=0}^k a_{k,i}r^k\cos^i\theta\sin^{k-i}\theta\\
				&= k(k!)\sum_{i=0}^k a_{k,i}r^{k-1}\cos^i\theta\sin^{k-i}\theta\\
				&= \frac{k}{r}k!\sum_{i=0}^k a_{k,i}r^k\cos^i\theta\sin^{k-i}\theta\\
				&= \frac{k}{r}T_k\bm{v}^k
\end{split}
\end{equation}

Since $r\frac{\partial}{\partial r} = x \frac{\partial}{\partial x} + y \frac{\partial}{\partial y}$ we get:

 $$x\pd{T_k\bv^k}{x} + y\pd{T_k\bv^k}{y} = r\pd{T_k\bv^k}{r} =k T_k\bv^k$$

Differentiating w.r.t angle $\theta$ to look for extrema:

$$\pd{T_k\bv^k}{\theta} = 0 \Leftrightarrow  -y \pd{T_k\bv^k}{x} + x\pd{T_k\bv^k}y = 0$$ 

which yields the following equations:

\begin{equation}
\begin{split}
    &\left\{
    \begin{array}{ccc}
      -y \pd{T_k\bv^k}{x} + x\pd{T_k\bv^k}y & = & 0 \\
    x\pd{T_k\bv^k}{x} + y\pd{T_k\bv^k}{y} & = & k T_k\bv^k\\  
    \end{array}\right.\\
\Leftrightarrow
&\left\{
    \begin{array}{ccc}
      -y^2 \pd{T_k\bv^k}{x} - x^2\pd{T_k\bv^k}x & = & -xk T_k\bv^k \\
    x^2\pd{T_k\bv^k}{y} + y^2\pd{T_k\bv^k}{y} & = & y k T_k\bv^k\\
    \end{array}\right.\\
\Leftrightarrow
&\left\{
    \begin{array}{ccc}
      \|\bv\|^2 \pd{T_k\bv^k}{x} & = & xk T_k\bv^k \\
      \|\bv\|^2\pd{T_k\bv^k}{y} & = & y k T_k\bv^k\\
    \end{array}\right.\\
\Leftrightarrow&  \pd{T_k\bv^k}{\bv}  =  \frac{\bv}{\|\bv\|^2} k T_k\bv^k\\
\Leftrightarrow&  k T_k\bv^{k-1}  =  \frac{\bv}{\|\bv\|^2} k T_k\bv^k\\
\Leftrightarrow& T_k\bv^{k-1}  =  \frac{T_k\bv^k}{\|\bv\|^2} \bv\\
    \end{split}
    \label{eq:proof}
\end{equation}

Since we look for real unitary vectors, we add the constraint that $r=\|\bv\|=1$. Moreover, setting $\lambda = T_k\bv^k$, we get $T_k\bv^{k-1}=\lambda \bv$ and $\bv$ is a real E-eigenvalue of $T_k$. The reverse holds using the same equations.
\end{proof}

\begin{definition}
Given a $K$ times continuously differentiable function $f$ defined on $\bb R^2$ with values in $\bb R$, the principal directions of order $k$  ($1< k \leq K$) of $f$ at $(0,0)$ are defined as the real $E$-eigenvectors of its $k^{th}$ order differential tensor $T_k$.
\end{definition}

 One should notice that Qi et al. defined several types of eigenvalues and eigenvectors~\cite{Qi05,Qi06,Qi07}. In particular, if an $E$-eigenvector $\bm{v}$ is real and if its corresponding $E$-eigenvalue $\lambda$ is also real, then $\bm{v}$ is a $Z$-eigenvector and $\lambda$ is a $Z$-eigenvalue. Then our $E$-eigenvalues are also $Z$-eigenvalues in this setting.

Figure \ref{fig:saddles} illustrates the principal directions of order $3$ and $8$ for some illustrative functions at $(0,0)$.

The above form is not very amenable to find the zero-crossings of the derivative of $T_k\bv^k$ with respect to $\theta$. Instead we propose to use its expression in the Wavejets basis (consisting of all functions $(r,\theta)\rightarrow r^k e^{\i n\theta}$ for $k\in \bb N, -k\leq n \leq k$)~\cite{Bearzi18}, hence using polar coordinates:

\begin{equation}
 f(r,\theta) = \sum_{k=0}^K\sum_{n=-k}^k \phi_{k,n} r^k e^{\i n\theta} +o(r^K)
 \label{eq:wavejets}
\end{equation}

with $\phi_{k,n}$ the Wavejets decomposition coefficients. Among other properties, $\phi_{k,n}=\phi_{k,-n}^*$ and $\phi_{k,n}=0$ is $k$ and $n$ do not share the same parity (see \cite{Bearzi18} for more details).

By identification of the coefficients in front of the powers of $r$ with the coefficients of the Taylor expansion using tensors, we have:

\begin{equation}
\frac1{k!}T_k\bv^k = \sum_{n=-k}^k \phi_{k,n} r^k e^{\i n\theta}
 \label{eq:identif}
\end{equation}

\begin{corollary}
Given a function $f$ defined on $\bb R^2$ with values in $\bb R$, $K$ times differentiable at $(0,0)$, the principal directions of order $k$ of $f$ correspond to the real $E$-eigenvectors of $T_k$ and they can be retrieved out of the Wavejets decomposition of $f$ by looking at the zeros of:

\begin{equation}
\frac{\partial}{\partial\theta}\sum_{n=-k}^k \phi_{k,n}e^{\bm{i}n\theta} = \sum_{n=-k}^k
\bm{i}n\phi_{k,n}e^{\bm{i}n\theta}
\label{eq:wavejets_eigen}
\end{equation}
\end{corollary}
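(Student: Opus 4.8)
The plan is to obtain the statement as a direct consequence of Theorem~\ref{th:principal_directions} together with the change of basis~\eqref{eq:identif}. By Theorem~\ref{th:principal_directions} (whose hypotheses, $1<k\le K$ and $f$ sufficiently differentiable, are in force here), the principal directions of order $k$ are exactly the unit vectors $\bm{v}=(\cos\theta,\sin\theta)^T$ for which $\frac{\partial}{\partial\theta}T_k\bm{v}^k=0$. So the only work left is to re-express $T_k\bm{v}^k$ through the Wavejets coefficients and differentiate with respect to $\theta$.

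First I would substitute $\bm{v}=(r\cos\theta,r\sin\theta)^T$ into~\eqref{eq:identif}, which gives $\frac1{k!}\,T_k\bm{v}^k = r^k\sum_{n=-k}^k \phi_{k,n}\,e^{\bm{i}n\theta}$. Since the $\theta$-derivative is taken at fixed $r$, the prefactors $k!$ and $r^k$ are nonzero constants that do not move the zeros; in particular I may take $r=1$, in agreement with the constraint $\|\bm{v}\|=1$. Hence $\frac{\partial}{\partial\theta}T_k\bm{v}^k=0$ is equivalent to $\frac{\partial}{\partial\theta}\sum_{n=-k}^k \phi_{k,n}\,e^{\bm{i}n\theta}=0$, and term-by-term differentiation of this finite sum produces exactly~\eqref{eq:wavejets_eigen}. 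Conversely, reading the same chain of equivalences backwards, every zero of~\eqref{eq:wavejets_eigen} yields a $\bm{v}$ satisfying the system of Theorem~\ref{th:principal_directions}, i.e.\ a real $E$-eigenvector of $T_k$.

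I would then add a sentence on why this is a bona fide real equation: $T_k$ is a real tensor contracted with a real vector, so $T_k\bm{v}^k\in\mathbb{R}$; equivalently, the Hermitian symmetry $\phi_{k,n}=\phi_{k,-n}^*$ recalled from~\cite{Bearzi18} makes $\sum_n \phi_{k,n}e^{\bm{i}n\theta}$ and its $\theta$-derivative real-valued, so~\eqref{eq:wavejets_eigen} genuinely has real roots $\theta$, each one (together with $\theta+\pi$, describing the same axis) giving a principal direction $(\cos\theta,\sin\theta)^T$. If desired, one may further multiply~\eqref{eq:wavejets_eigen} by $e^{\bm{i}k\theta}$ to turn it into a polynomial of degree $2k$ in $z=e^{\bm{i}\theta}$, whose roots of modulus one are the sought directions; this is the form that is actually convenient for computation.

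There is no substantial obstacle here: the result is essentially a restatement of Theorem~\ref{th:principal_directions} in the Wavejets basis. The only points calling for care are tracking the multiplicative constants $k!$ and $r^k$ when moving the $\theta$-derivative inside, and noting that the Hermitian symmetry of the $\phi_{k,n}$ keeps the whole expression — and therefore its zero set — real.
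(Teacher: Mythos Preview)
Your proposal is correct and follows exactly the paper's approach: invoke Theorem~\ref{th:principal_directions} to identify the principal directions with the zeros of $\frac{\partial}{\partial\theta}T_k\bm{v}^k$, then differentiate the Wavejets identity~\eqref{eq:identif} in $\theta$. The paper's own proof is two sentences to that effect; your additional remarks on the constants $k!$, $r^k$ and on Hermitian symmetry are sound elaborations but not required.
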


\begin{proof}
As shown in Theorem \ref{th:principal_directions}, the $E$-eigenvectors directions correspond to the zeros of the angular derivative of $T_k\bm{v}^k$. Thus, a direct angular differentiation of Equation \ref{eq:identif} yields the result. 
\end{proof}

Since coefficients $\phi_{k,n}=\phi_{k,-n}^*$ in the Wavejet decomposition of a real function, the zero-crossings of Equation \ref{eq:wavejets_eigen} can be obtained by solving the following equation:

\begin{equation}
\sum_{n=1}^k n ( Im(\phi_{k,n})\cos(n\theta)+Re(\phi_{k,n})sin(n\theta)) = 0
\end{equation}
A more convenient form to find roots, for example using Newton's method.

So far, we defined the principal directions as the eigenvectors of a tensor which we linked to the extrema of a function $g_k(\theta)$. The eigenvalues of the tensor can be also linked to this function. Calling $\theta_e$ an angle corresponding to an extremum of $g_k$, the corresponding eigenvalue $\lambda_e$ can be recovered as:

$$g_k(\theta_e) = \frac{\lambda_e}{k!} $$

This follows directly from the last equality in \ref{eq:proof}.

\begin{definition}
Among all principal directions, we call \textit{Maximum} principal directions (resp. \textit{minimum principal directions}) the directions that correspond to local maxima (resp. local minima) of $g_k(\theta) = \sum_{n=-k}^k \phi_{k,n}e^{\bm{i}n\theta}=\frac{T_k\bm{v}^k}{k!r^k}$
with $\bm v = (r \cos \theta, r \sin \theta)$.
\label{def:maxmin}
\end{definition}

\section{Properties of the principal directions}
\label{sec:prop}

\begin{figure}
 \begin{center}
  \includegraphics[height=0.18\textheight]{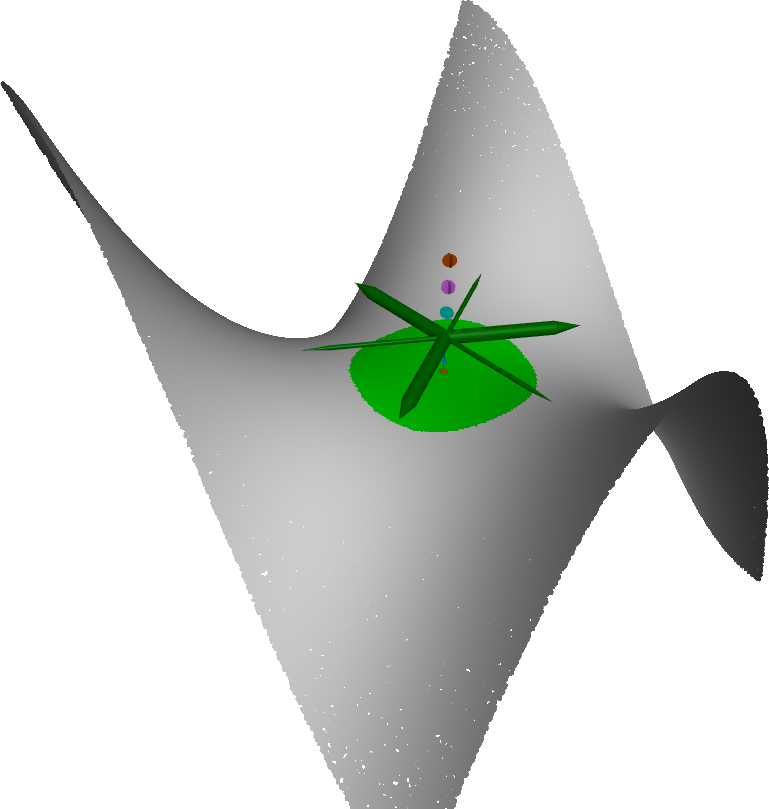}
  \includegraphics[height=0.18\textheight]{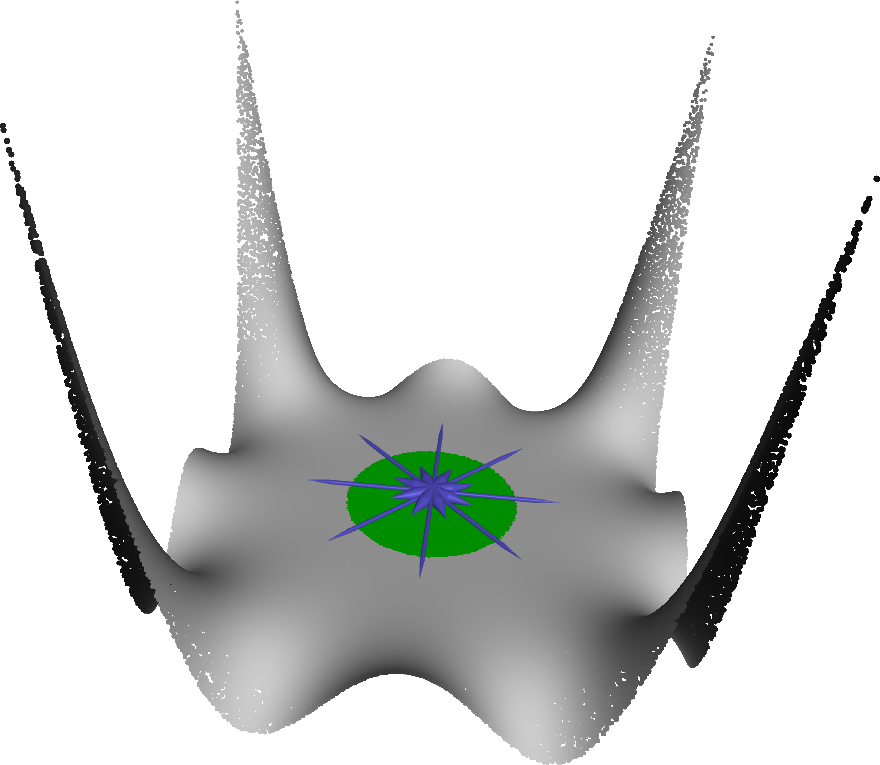}
 \end{center}
\caption{Two synthetic surfaces with relevant principal directions of order $3$ (monkey saddle, left) and order $8$ (octopus saddle, right). Other orders vanish and exhibit no principal directions.}
\label{fig:saddles}
\end{figure}

\paragraph{Constraints on the principal directions} 
The functions $g_k$ (Definition \ref{def:maxmin}) can be rewritten as $g_k(\theta)= \phi_{k,0}+2\sum_{n=1}^k ( Re(\phi_{k,n})\cos(n\theta)-Im(\phi_{k,n})sin(n\theta))$. From the periodicity of cosine and sine functions, we deduce that:
\begin{itemize}
	\item If $k$ is even then $g_k(\theta)=g_k(\theta+\pi)$, hence if $\theta_{0}$ corresponds to a maximum principal direction, $\theta_{0}+\pi$ is also a maximum principal direction.
	\item If $k$ is odd, $g_k(\theta)=-g_k(\theta+\pi)$, hence if $\theta_{0}$ corresponds to a maximum principal direction then $\theta_0+\pi$ corresponds to a minimum principal direction.
\end{itemize}

\paragraph{Number of principal directions} 
There are at most $2k$ principal directions of order $k$, since finding the zeros of $\pd{g_k}{\theta}$ amounts to finding the $1$-norm roots of a real polynomial of order $2k$ 
(obtained by multiplying Equation \ref{eq:wavejets_eigen} by $e^{\bm{i}k\theta}$).
Since two maximum principal directions should be separated by one minimum principal direction and conversely, the number of maximum principal directions and the number of minimum principal directions should be equal and their maximum number is thus $k$.
Following the parity constraints on the location of maxima and minima above, for an even order, the number of maximum principal directions is necessarily even. For similar reasons, for an odd order, the number of maximum principal directions is necessarily odd.

\paragraph{Order 2 principal directions} The principal directions of order $2$ correspond to the classical principal curvature direction, however the maximum (resp. minimum) principal directions might not correspond to the maximum (resp. minimum) curvature directions.

\paragraph{Regularity and link with N-RoSy} Order $k$ principal directions can turn into a $k$-RoSy, if and only if $\phi_{k,n}=0$ for all $n\neq \pm k$. The principal direction distribution can however not be arbitrary: this can be seen by considering order $3$ principal directions, fixing their 3 angles and trying to solve for the coefficients yielding a $0$ derivative for these 3 angles. This amounts to considering $4$ unknowns corresponding to the real and imaginary parts of the coefficients $\phi_{3,1}$ and $\phi_{3,3}$ (since $\phi_{3,n}=\phi_{3,-n}^*$), linked by 6 equations given by $\theta_i$ and $\theta_i+\pi$ with $i=1\cdots 3$.
A rank analysis yields that the system is sometimes invertible and hence yields only the trivial solution of all coefficients set to $0$. Sometimes the system has rank $2$ or $3$ depending on the chosen angles and hence yields a nontrivial solution. Hence not all kind of irregularities can be represented by the principal directions of the tensor.
On Figure \ref{fig:saddles} we show a monkey saddle and an octopus saddle, whose respective order 3 and 8 principal directions correspond to $3$ and $8$-RoSy when considering only maximum (alternatively only minimum) principal directions.

\section{Directions Computation per point}

We now propose to compute these directions on point set surfaces by using a local parametrization around each point. In most Geometry Processing applications, surfaces are known only through a set of points, sometimes connected into a mesh, and the surface in between should be inferred by regression to estimate principal directions.

In practice, to compute principal directions, we perform a local Wavejets surface regression with a high enough order ($K=10$ in our experiments).
More precisely, let $p$ be a point of the point set, let $(p_i)_{i=1\cdots N}$ be its neighbors within some user-defined radius $r$. We compute a local tangent plane by robust PCA and deduce a local parametrization plane on which we choose an arbitrary tangent direction which serves as the origin direction for computing the local polar coordinates $(r_i,\theta_i,z_i)$ for each $p_i$. Then we solve for the Wavejets coefficients $\phi_{k,n}$ by minimizing:

\begin{equation}
 \sum_{i=1}^N w_i\|z_i - \sum_{k=1}^K\sum_{n=-k}^k \phi_{k,n} r_i^k e^{in\theta_i} \|_p^p
 \label{eq:estim}
\end{equation}

with $w_i = \frac1C\exp{-\frac{\|p - p_i\|^2}{2*\sigma^2}}$ and $\sigma = \frac r3$. This Gaussian weight avoids sharp boundary effects and makes the Wavejets estimation smoother in the ambient space.

Depending on the type of data, we can use the $\ell^1$-norm ($p=1$) when there is noise and outliers, or the $\ell^2$ norm ($p=2$) if the data has low noise. As shown by Levin~\cite{Levin98,Levin15}, the coefficients obtained by Moving Least Squares are continuously differentiable if a $\ell^2$ norm is used. The use of the $\ell^1$ norm does not provide such a guarantee. Hence depending on the required smoothness one should use a different norm in the estimation procedure of Equation \ref{eq:estim}.

\section{Experiments}

\paragraph*{Synthetic data}
To illustrate the behavior of arbitrary order principal directions, we compute order $3$ principal directions on a synthetic surface controlled by its Wavejets coefficients (Figure \ref{fig:order3}). The number of maximum directions is either 1 or 3 (hence either $1$ or $3$ minimum principal directions).

\begin{figure*}[h!]
\begin{center}
 \includegraphics[width=0.19\linewidth]{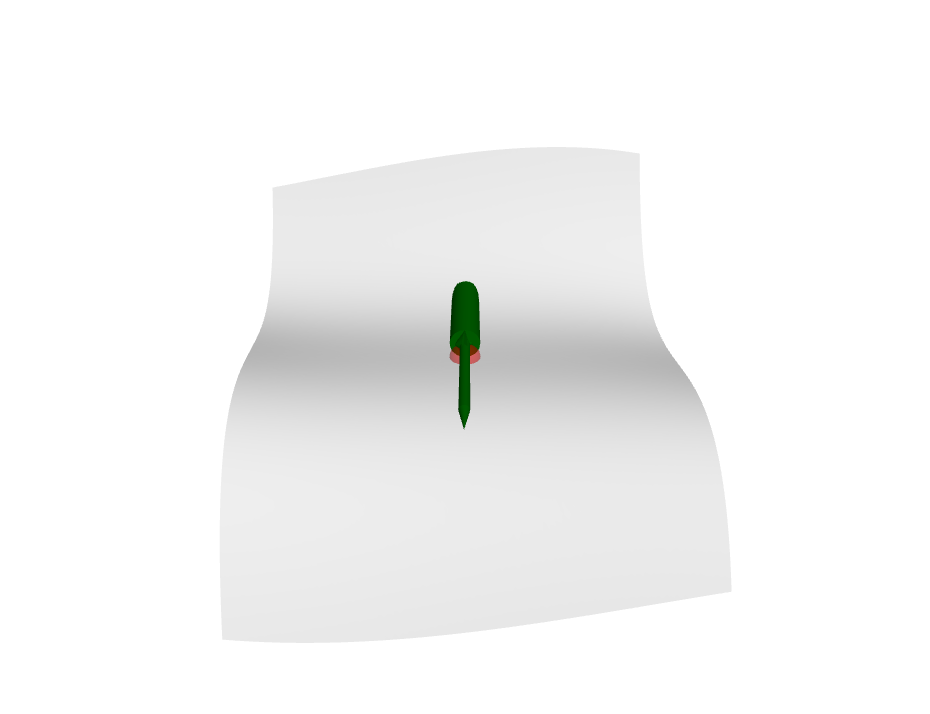}
 \includegraphics[width=0.19\linewidth]{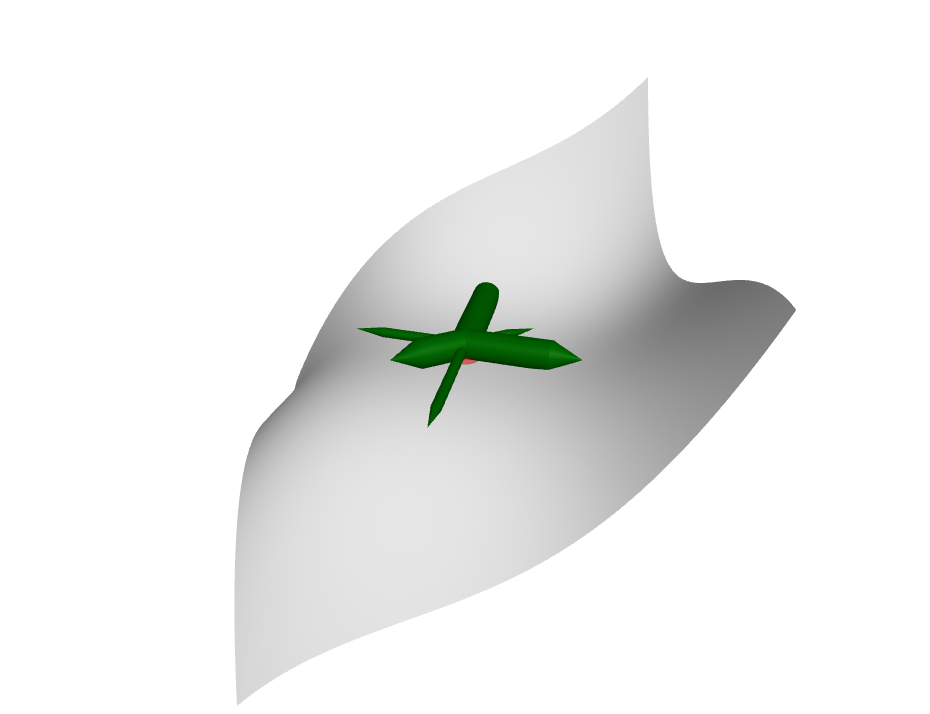}
 \includegraphics[width=0.19\linewidth]{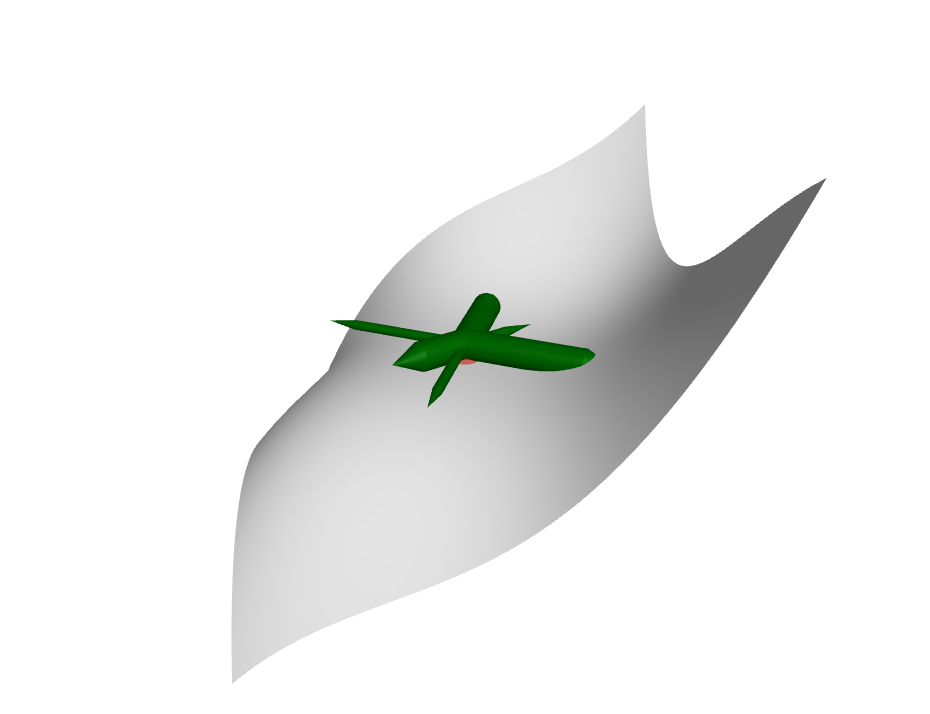}
 \includegraphics[width=0.19\linewidth]{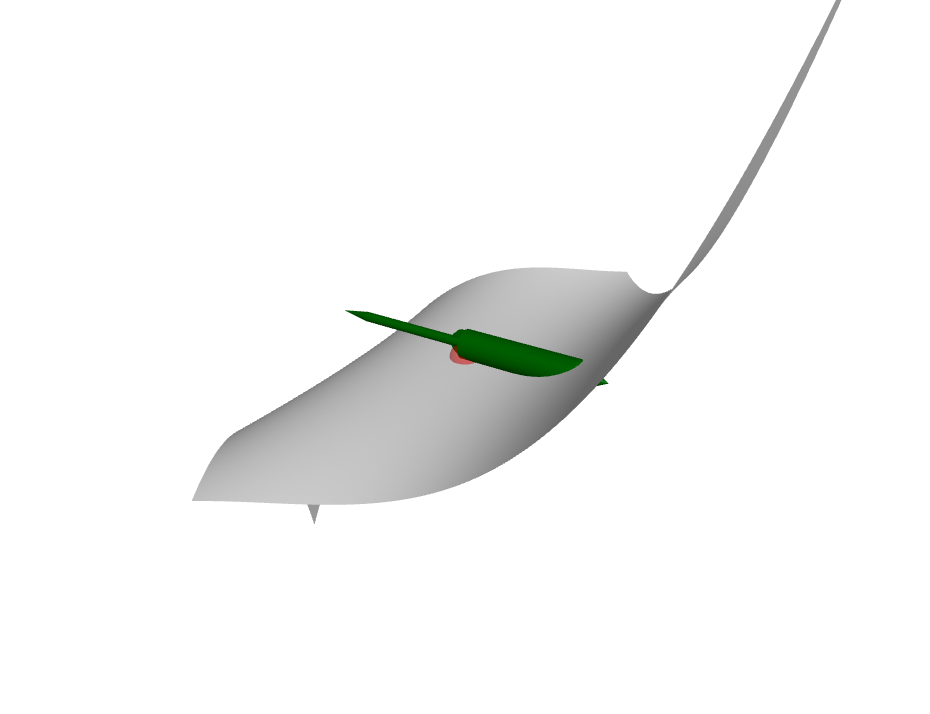}
 \includegraphics[width=0.19\linewidth]{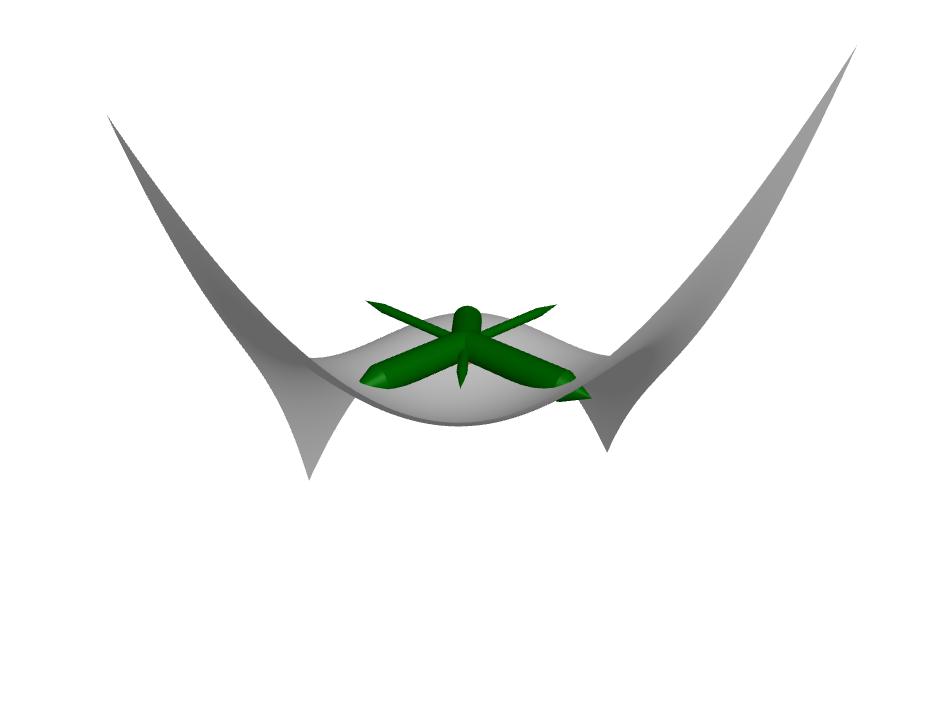}
\end{center}
 \caption{Order $3$ principal directions on a synthetic surface controlled by its Wavejets coefficients.\label{fig:order3}.}
\end{figure*}

On Figure \ref{fig:ridge2junction} we show order 2 and 3 principal directions on a smooth synthetic surface evolving from a ridge to a smooth T-junction. One can see that order $3$ takes slowly over order 2, with a preferred direction.

Figure \ref{fig:inter5plane_irreg} illustrates the behavior of orders 2 to 7 principal directions on a sharp feature created by 5 intersecting planes. No order alone captures all intersection directions, but orders 5 and 7 contain the correct directions. Interestingly, order 7 degenerates to create only 5 maximum principal directions.

\begin{figure}[h!]
 \begin{center}
  \includegraphics[width=0.25\linewidth]{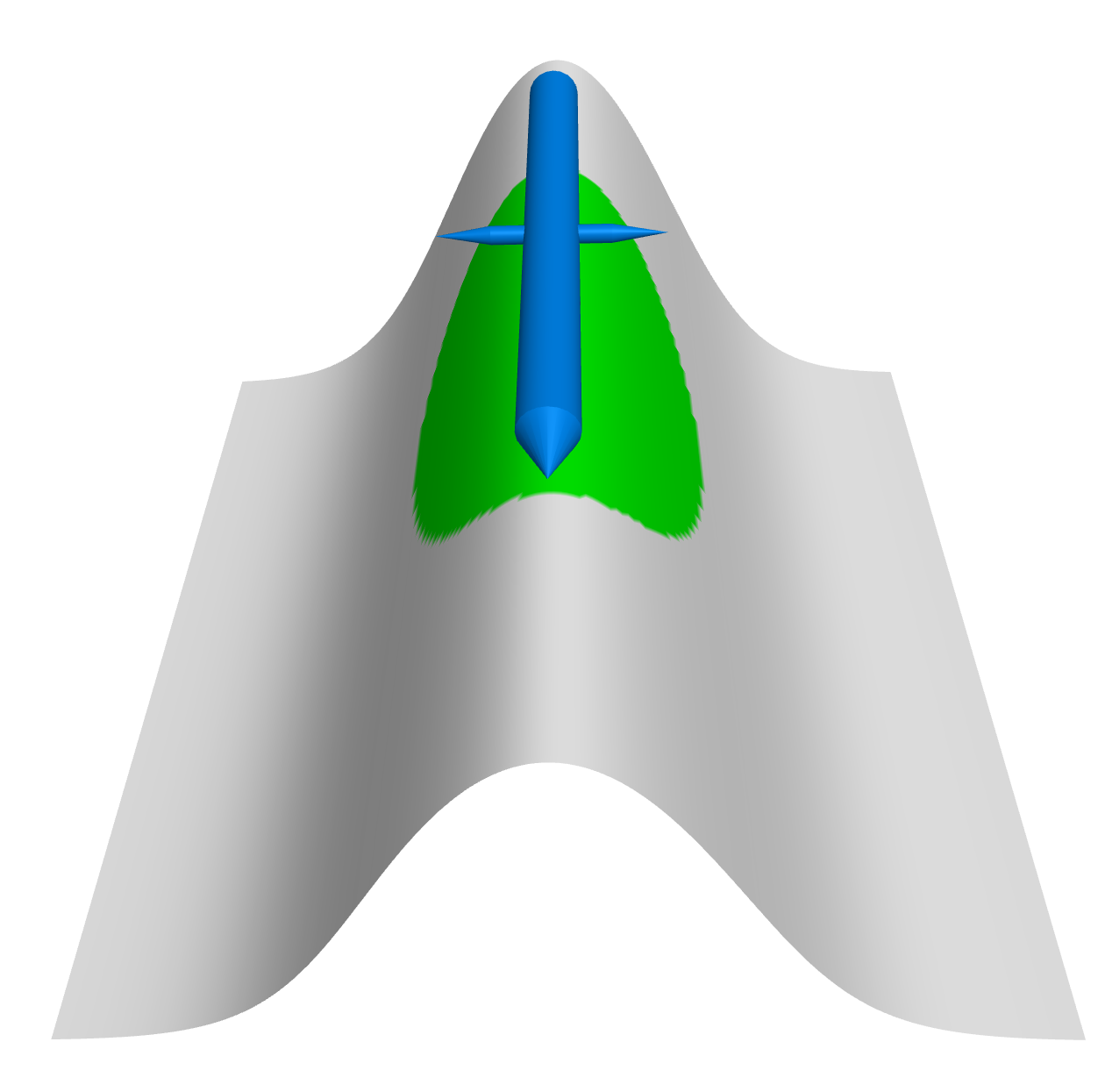}
  \includegraphics[width=0.25\linewidth]{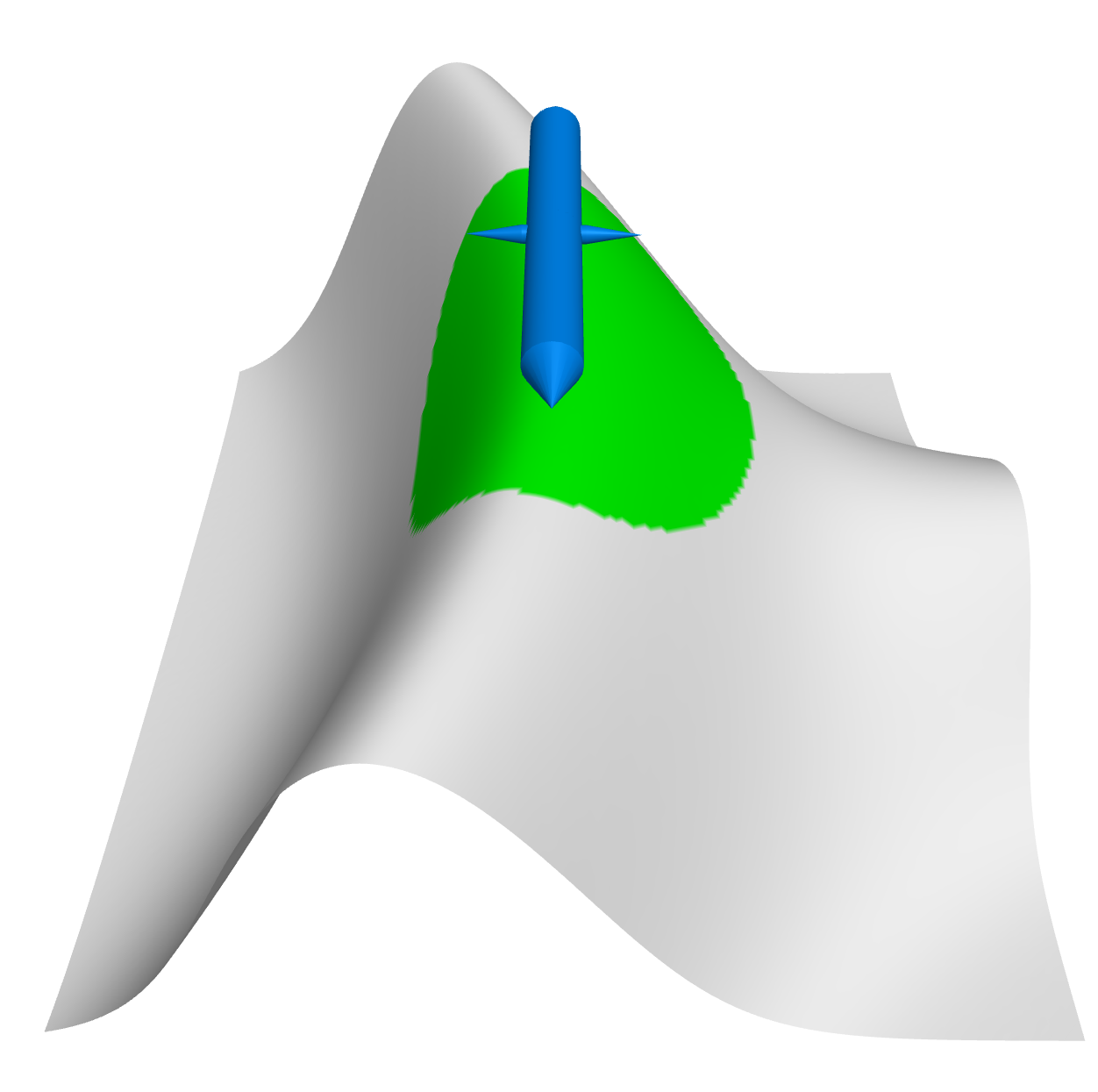}
  \includegraphics[width=0.25\linewidth]{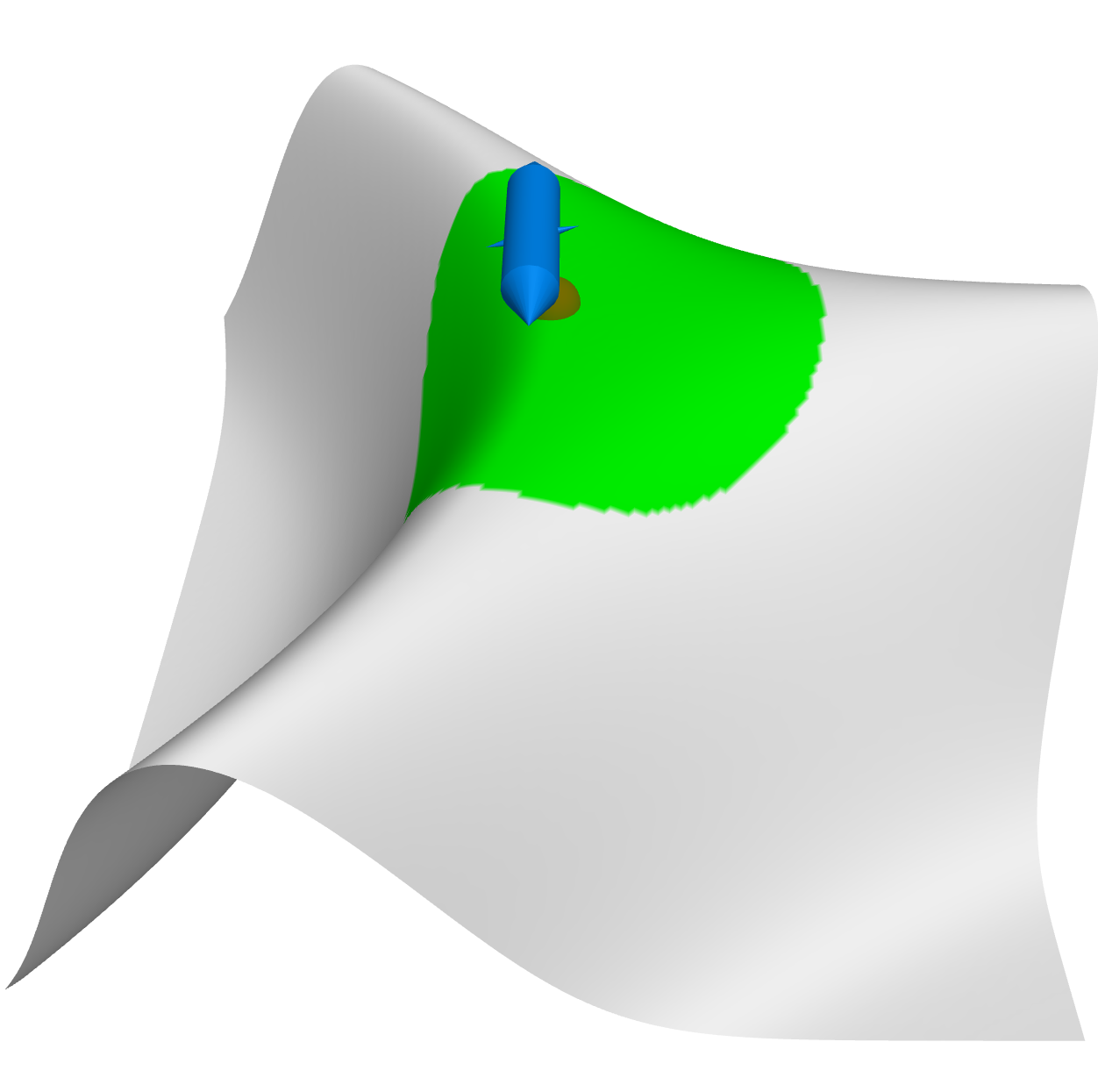}\\
  \includegraphics[width=0.25\linewidth]{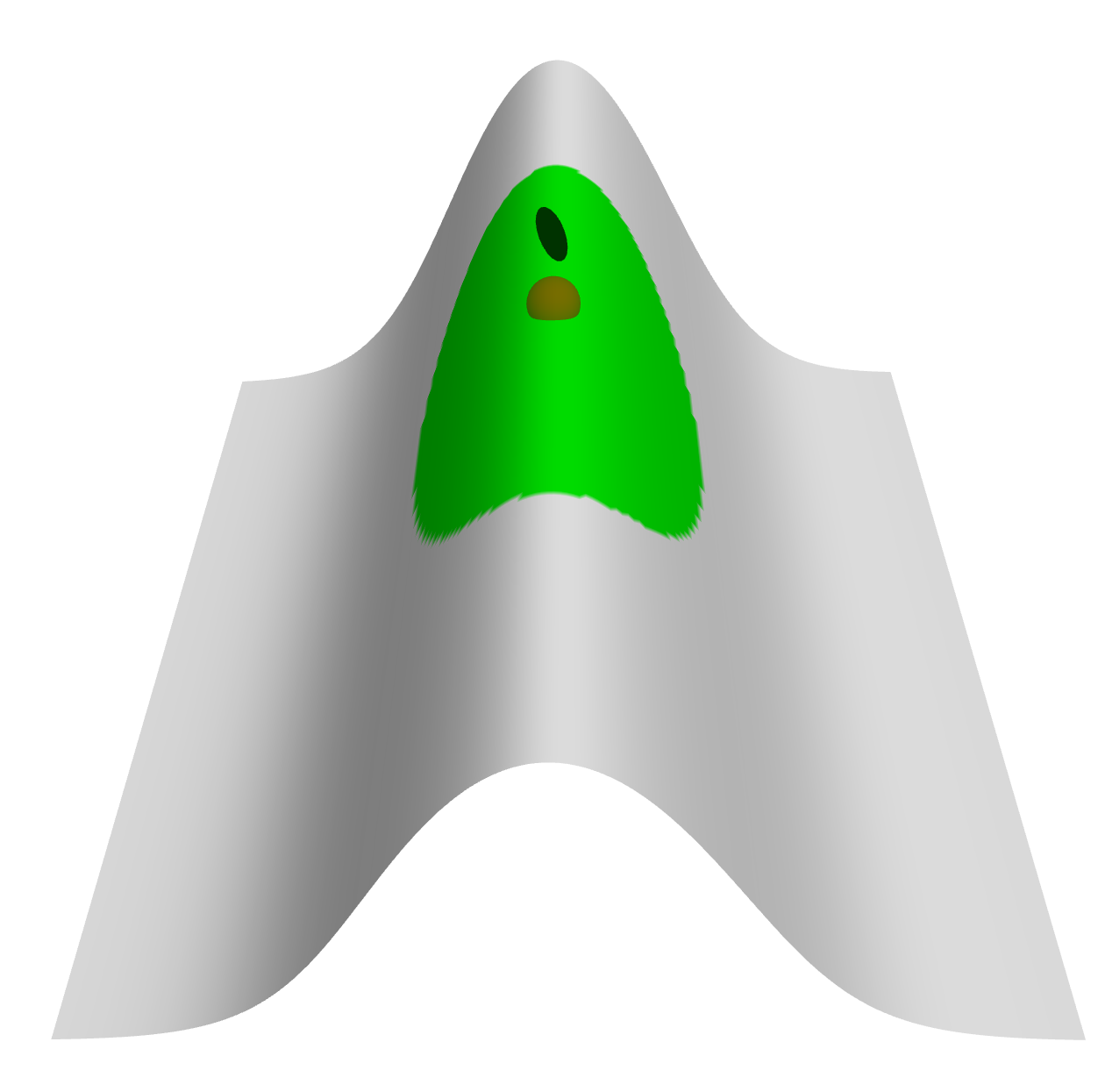}
  \includegraphics[width=0.25\linewidth]{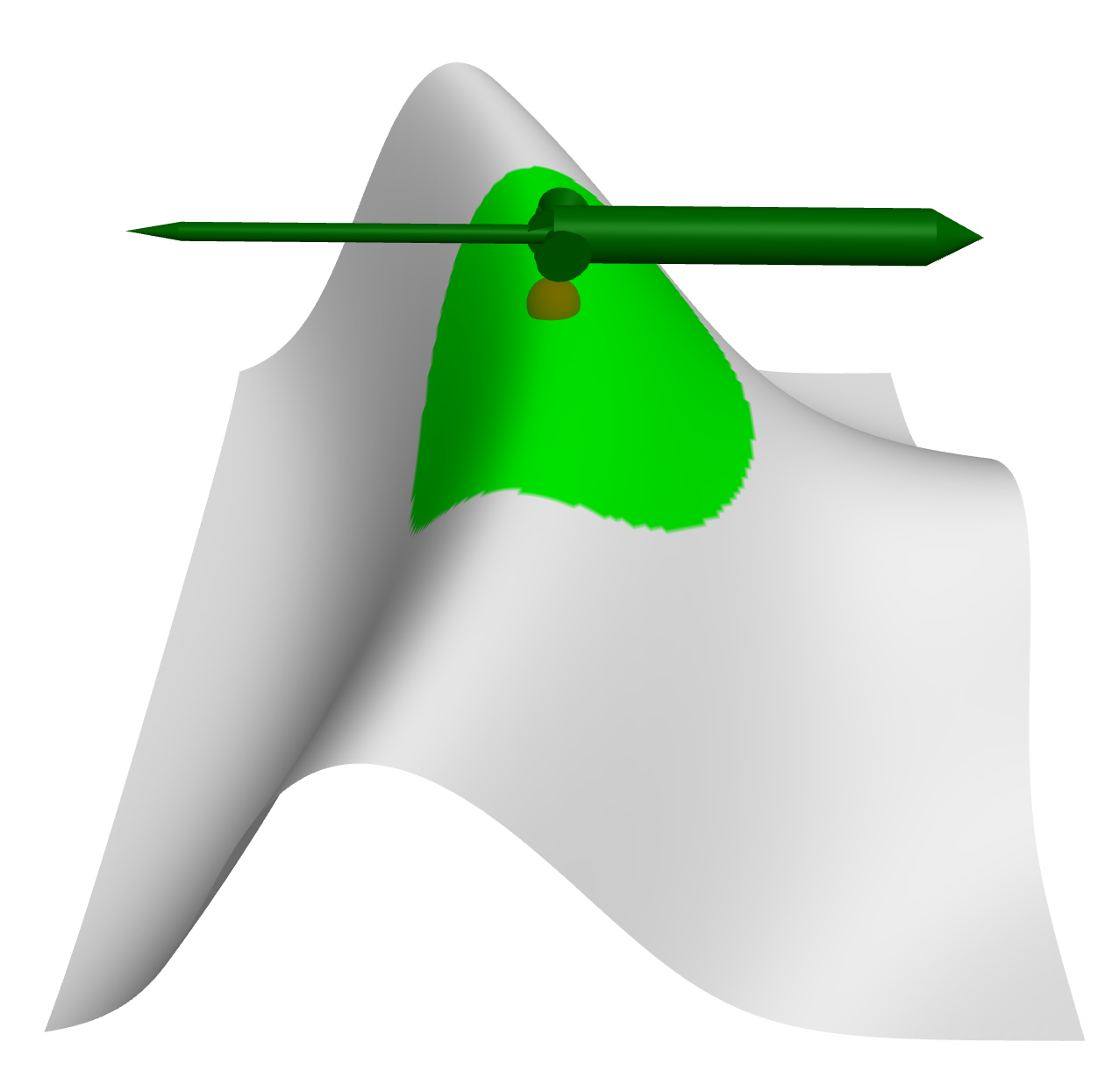}
  \includegraphics[width=0.25\linewidth]{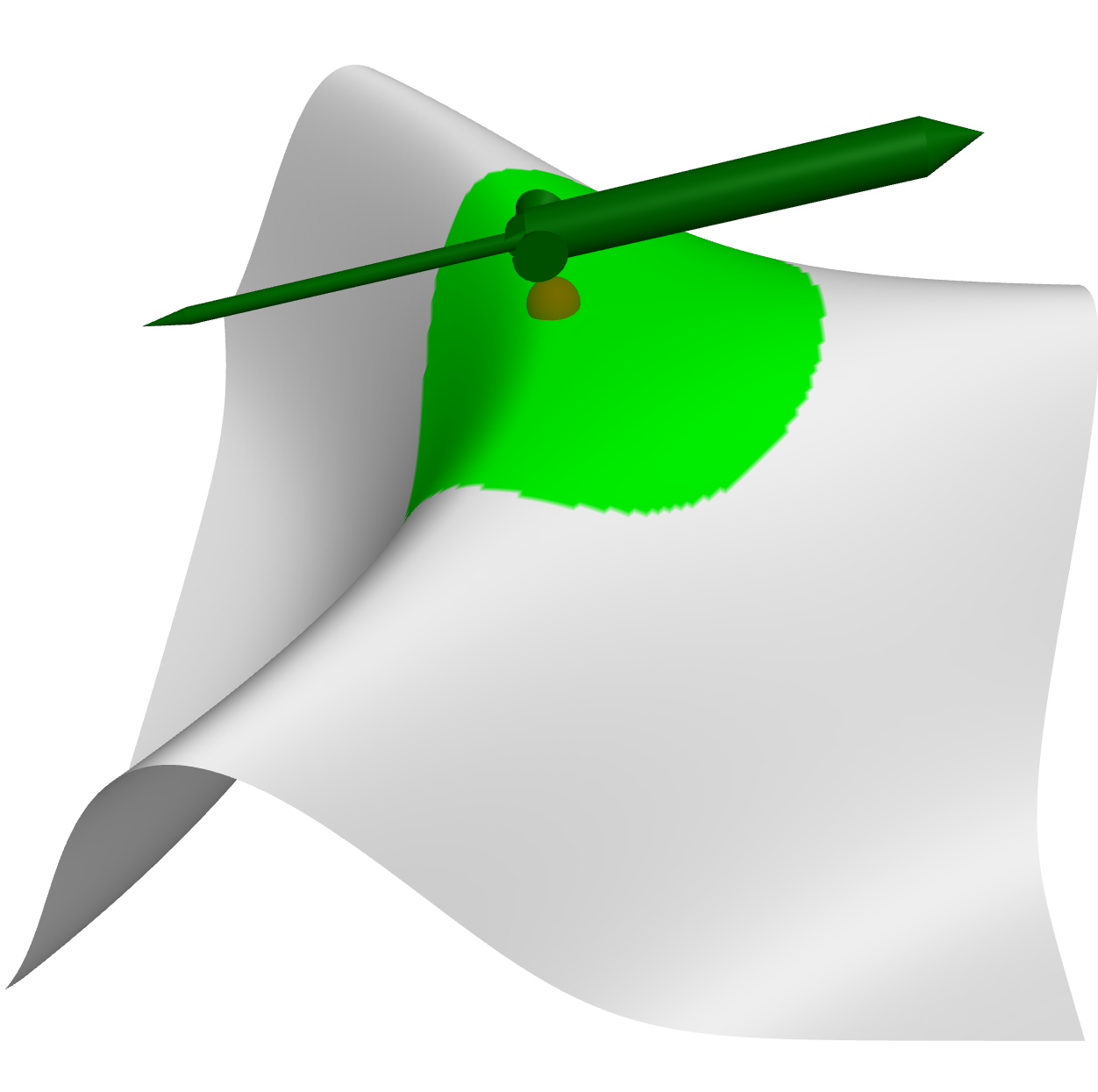}
 \end{center}
\caption{Order 2 (top) and 3 (bottom) principal directions on a surface evolving from a ridge (left) to a smooth T-junction (right). The amplitude of the eigenvector corresponds to the corresponding absolute function value.}
\label{fig:ridge2junction}
\end{figure}

\begin{figure}[h!]
 \begin{center}
  \subcaptionbox{Order 2}{\includegraphics[width=0.15\linewidth]{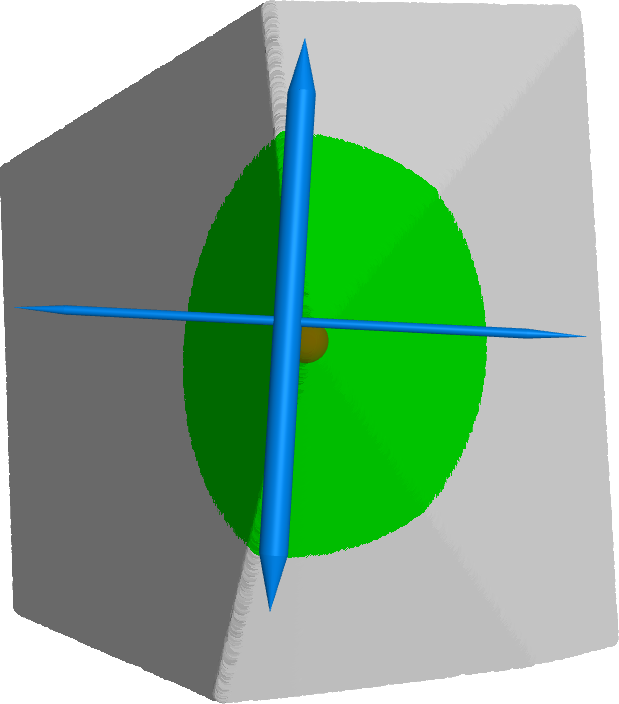}}
  \subcaptionbox{Order 3}{\includegraphics[width=0.15\linewidth]{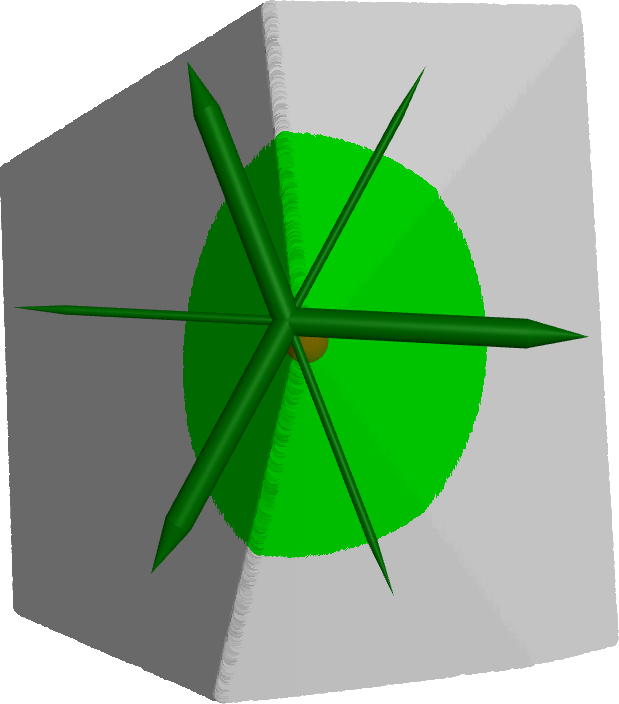}}
  \subcaptionbox{Order 4}{\includegraphics[width=0.15\linewidth]{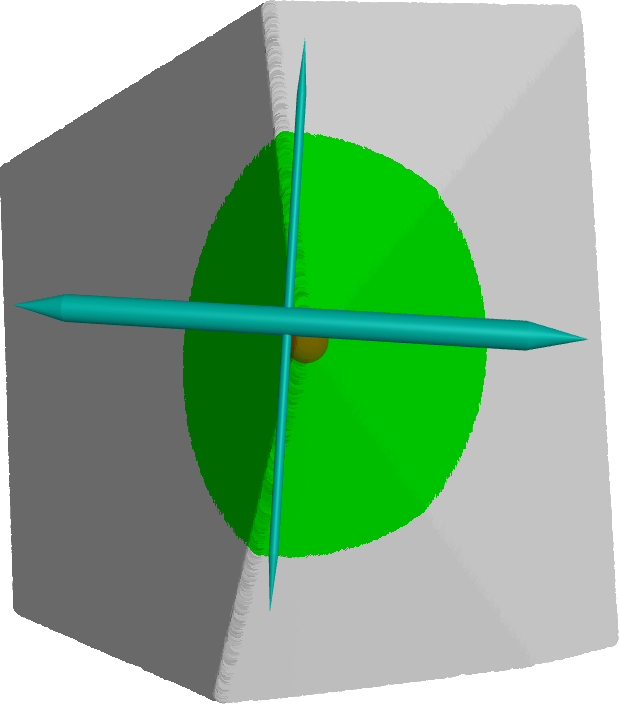}}
  \subcaptionbox{Order 5}{\includegraphics[width=0.15\linewidth]{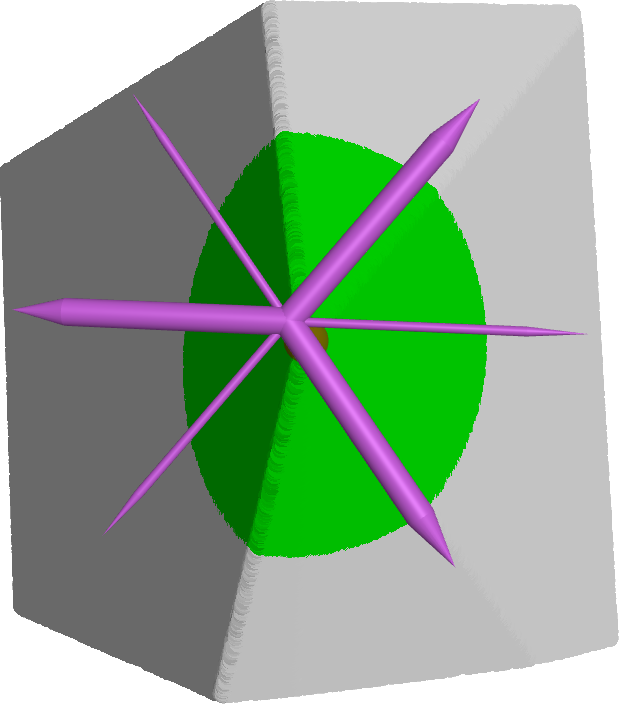}}
  \subcaptionbox{Order 6}{\includegraphics[width=0.15\linewidth]{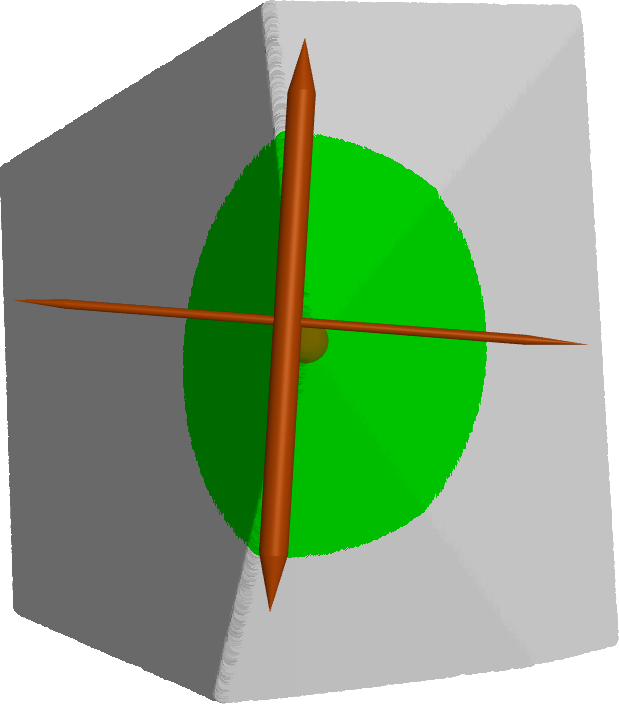}}
  \subcaptionbox{Order 7}{\includegraphics[width=0.15\linewidth]{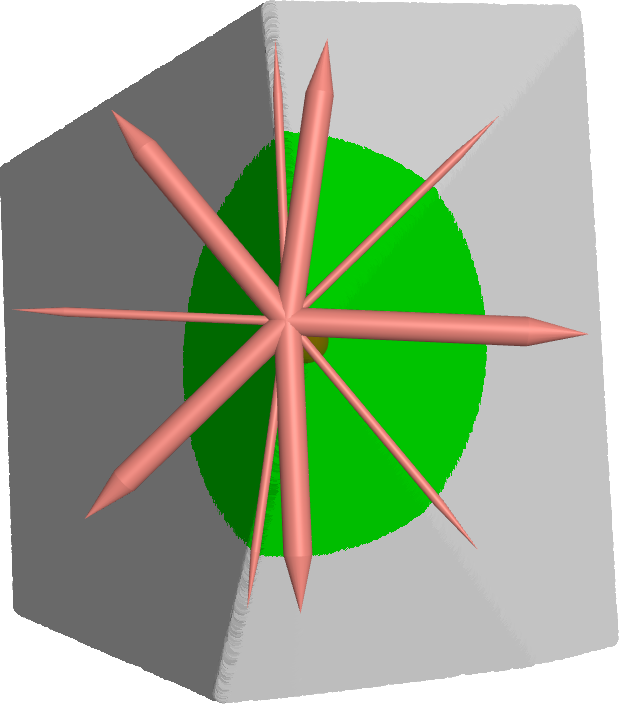}}
 \end{center}

 \caption{Five intersecting planes, the intersection distribution being irregular. Order 5 itself fails to capture this irregularity fully, but the proper intersection directions can be found among orders 5 and 7 directions. \label{fig:inter5plane_irreg}}
\end{figure}

On Figure \ref{fig:cube23}, we show the behavior of order 2 and 3 principal directions along the edges and corners of a synthetic cube. The length of the directions reflects the amplitude of the extremum. Order 3 accounting for some antisymmetry vanishes for edge points (which are symmetric) and order 2 vanishes on the corners (which are antisymmetric).

\begin{figure}[h!]
 \begin{center}
  \includegraphics[width=0.3\linewidth]{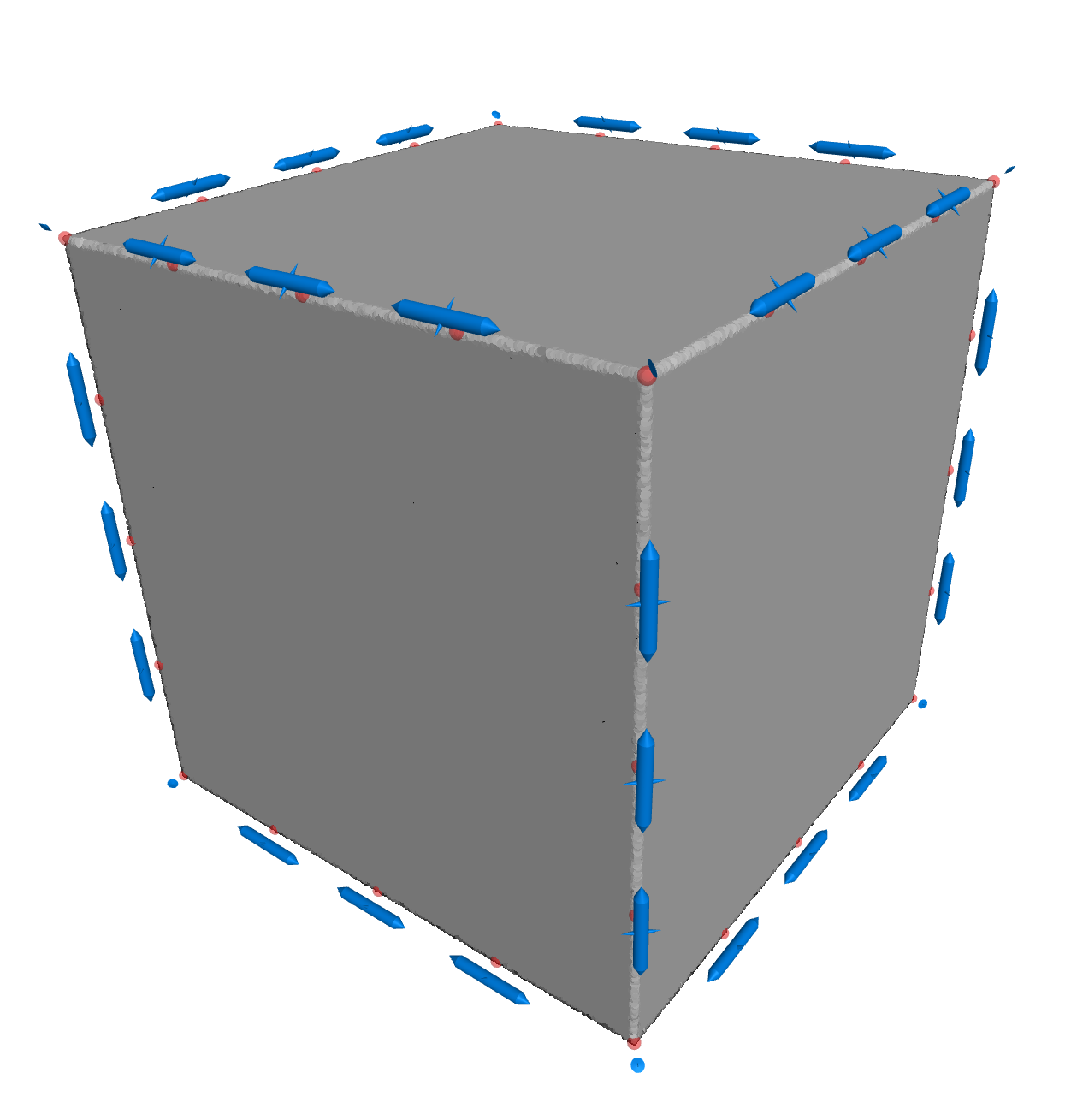}
  \includegraphics[width=0.3\linewidth]{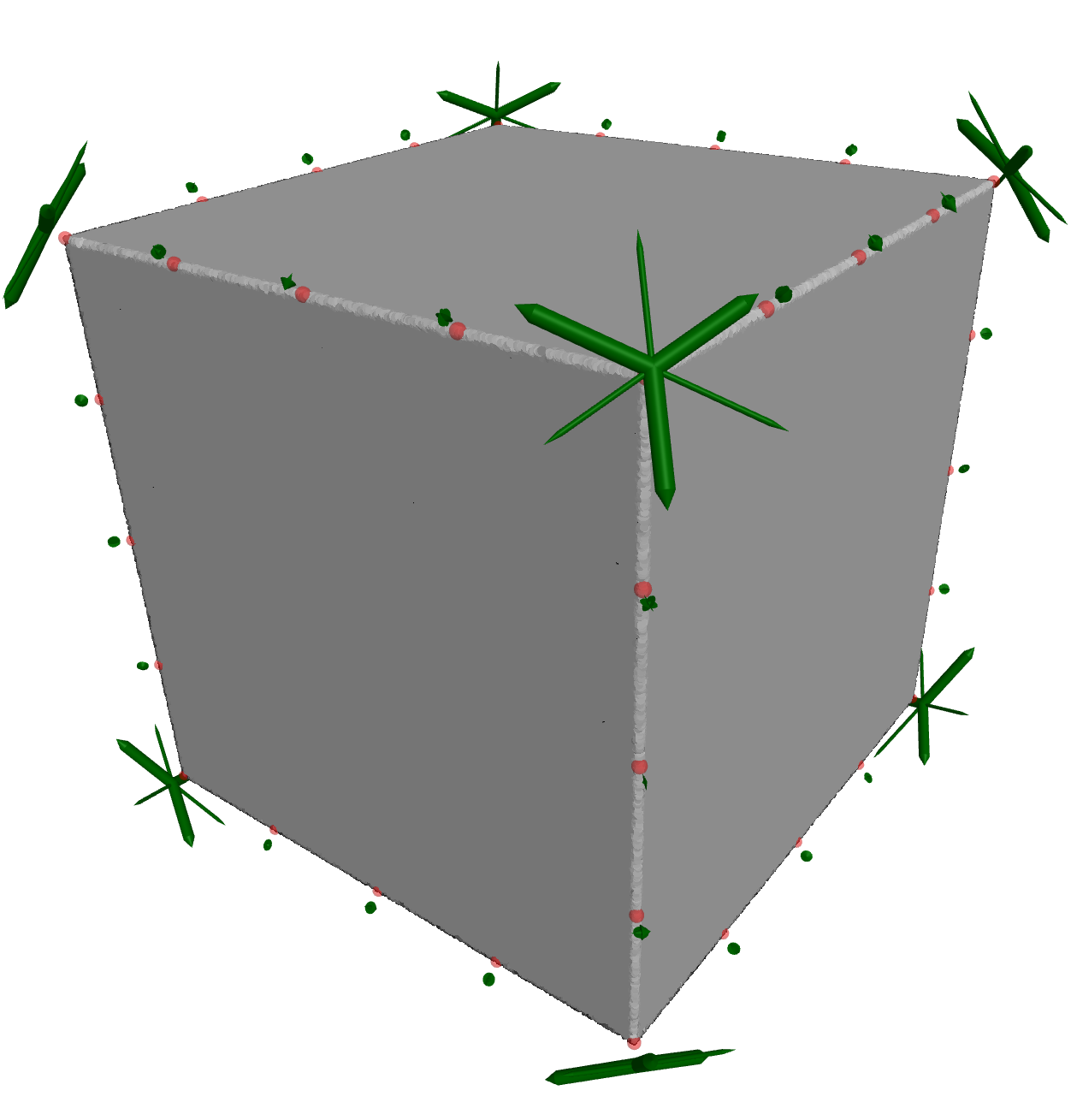}
 \end{center}
\caption{Order 2 and 3 principal directions on the edges and corners of a cube: Order 2 vanishes at the corner points, while order 3 vanishes on the edges of the cube.}
\label{fig:cube23}
\end{figure}

\paragraph*{Real world models}

\begin{figure}[h!]
 \begin{center}
  \includegraphics[width=0.35\linewidth]{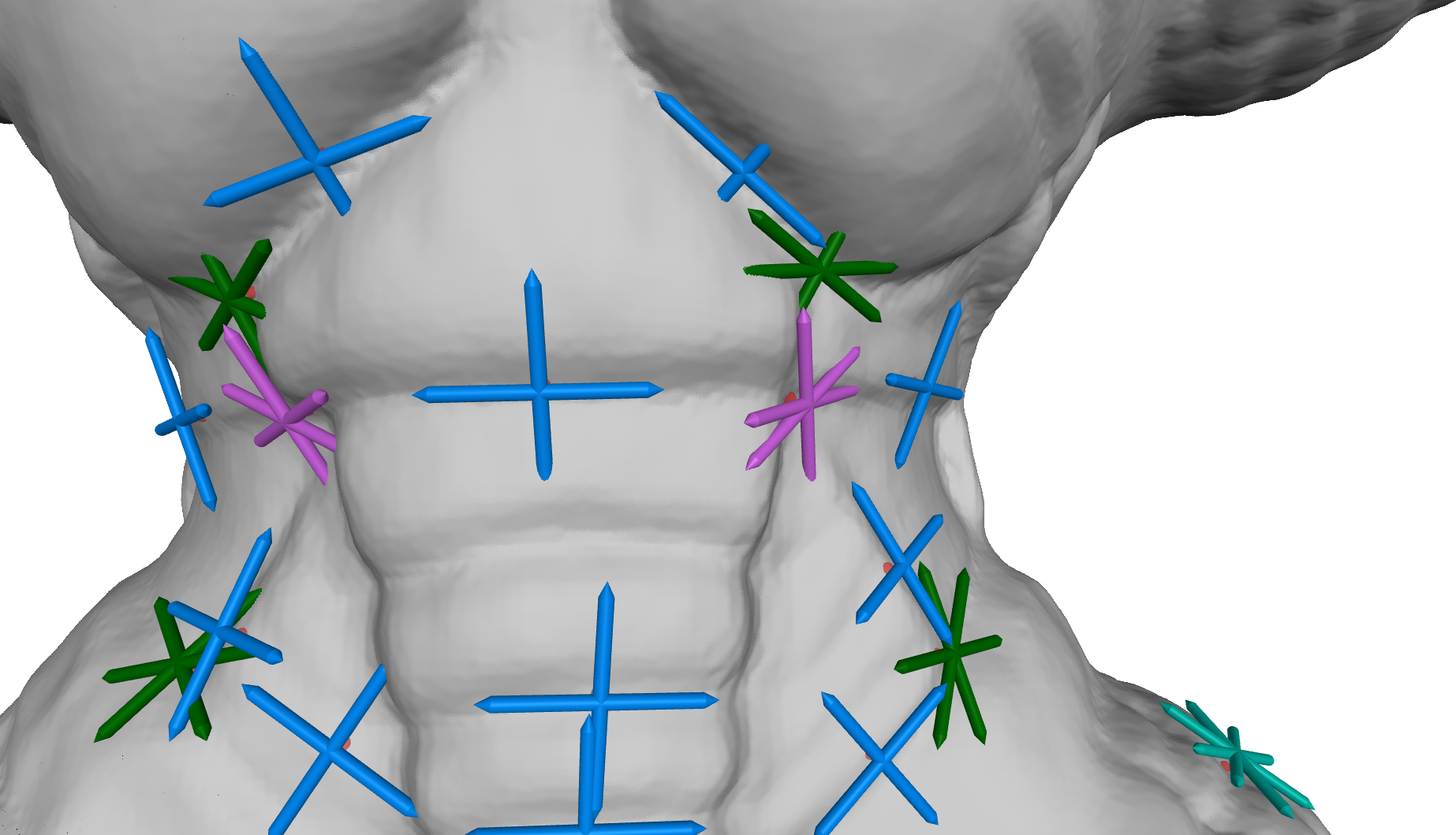}
  \includegraphics[width=0.25\linewidth]{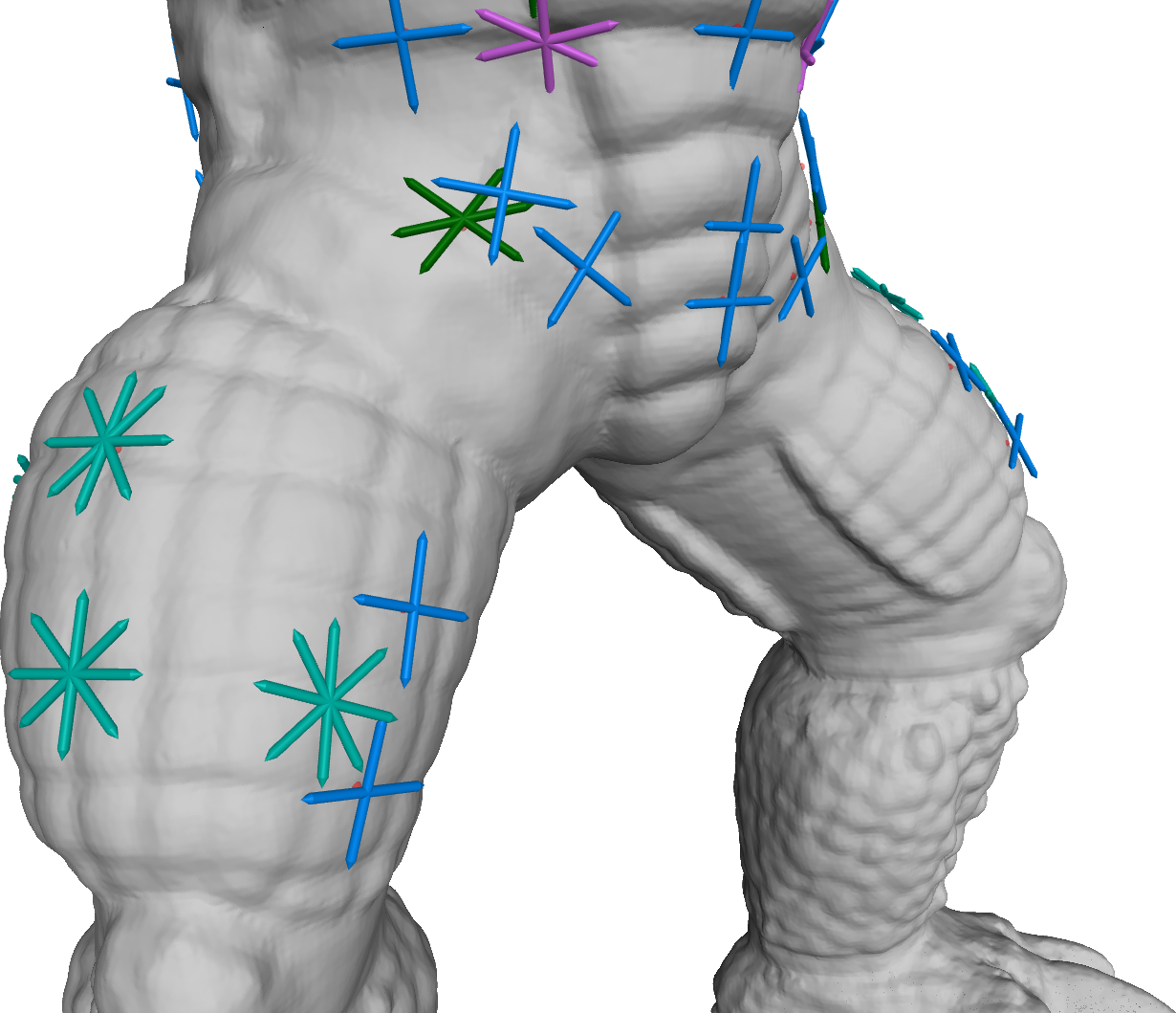}
 \end{center}
\caption{Principal directions of various orders on the torso and leg of the Armadillo (see also Fig.\ref{fig:teaser}). The orders are chosen manually as the most relevant geometrically (order 2 in blue; order 3 in green; order 4 in cyan). For clarity sake only the maximum directions are shown.}
\label{fig:armadillo_vf}
\end{figure}

Figures \ref{fig:teaser} and \ref{fig:armadillo_vf} show some of the principal directions computed by our method on the Armadillo model sampled with $5M$ points. The principal directions orders are chosen manually according to local geometric features. As expected order $2$ accounts well for valleys, order $4$ for valley crossings, and order $3$ for some antisymmetry and monkey-saddle like features.

\begin{figure}[h!]
 \begin{center}
  \includegraphics[width=0.3\linewidth]{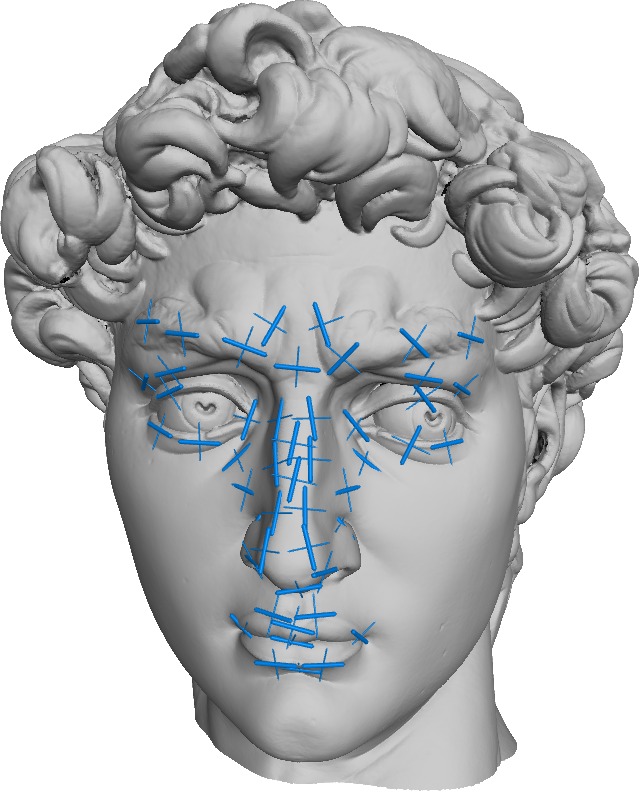}
  \includegraphics[width=0.3\linewidth]{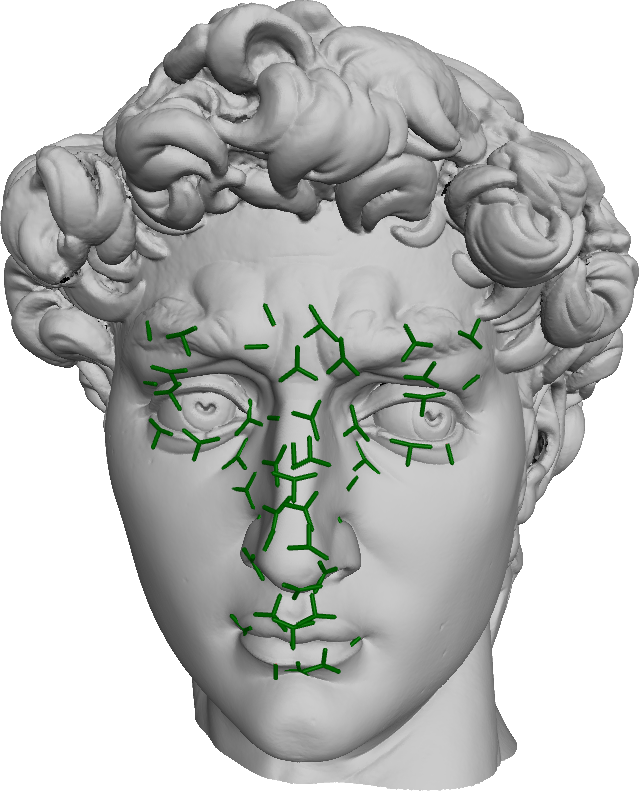}
 \end{center}
\caption{Vector fields for order 2 and 3 on the David head. Left: order 2 principal directions, Right: order 3 principal directions. For clarity sake only maximum principal directions are shown for order 3.}
\label{fig:david}
\end{figure}

Figure \ref{fig:david} shows the principal directions of orders 2 and 3 computed at various locations. While it is obvious that sometimes order 2 is enough (side of the nose), order 3 is meaningful between the eyebrows and around the lips.

Figures \ref{fig:rockerarm} and \ref{fig:fandisk} show some principal directions computed on some more geometric shapes. Here order 3 becomes especially meaningful near corners, at the intersection between 3 smooth surfaces.

\begin{figure}[h!]
 \begin{center}
  \includegraphics[width=0.4\linewidth]{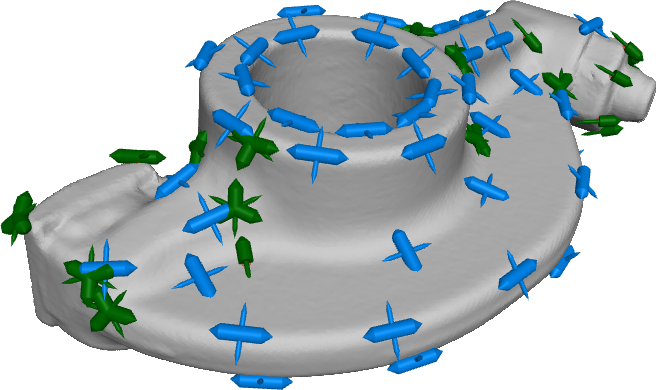}
 \end{center}
\caption{Some principal directions computed on the Rockerarm model, order 2 (blue), and 3 (green).}
\label{fig:rockerarm}
\end{figure}

\begin{figure}[h!]
 \begin{center}
  \includegraphics[width=0.4\linewidth]{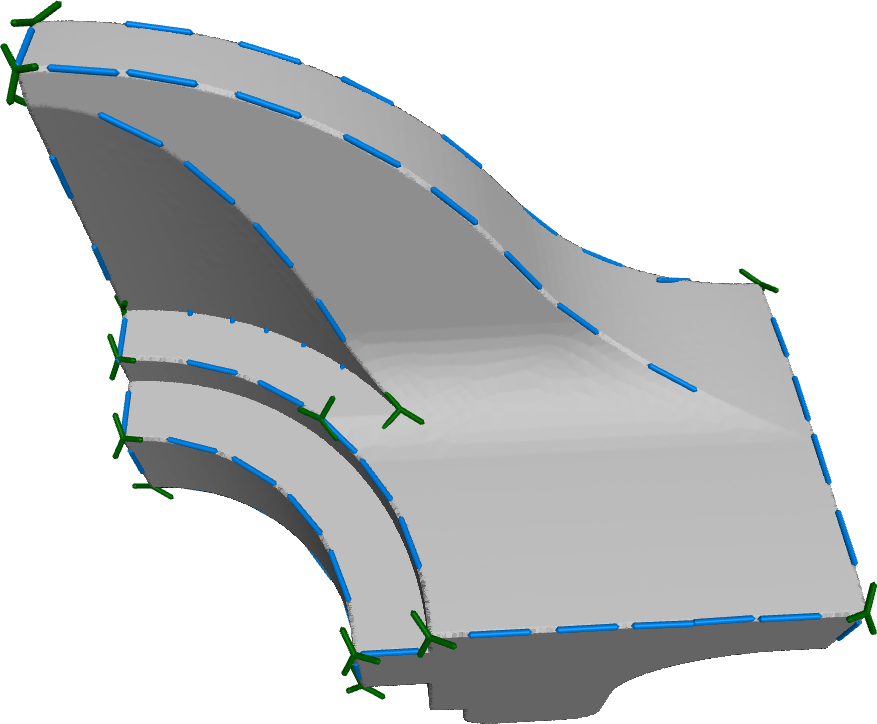}
 \end{center}
\caption{Some principal directions computed on the Fandisk, order 2 (blue), and 3 (green). For clarity sake only maximum directions are shown (see also Fig. \ref{fig:teaser}).}
\label{fig:fandisk}
\end{figure}

\paragraph*{Robustness}

To show the robustness of our principal directions estimation we add some Gaussian noise to the data.  Figure \ref{fig:cube_noisy} shows examples of principal directions estimation of order 2 and 3 on a cube with various noise levels.
 Importantly enough, this robustness does not stem from the principal direction decomposition itself but from the coefficients estimation. Once the coefficients are estimated the principal directions are obtained through function maximum and minimum computations, which can only introduce numerical errors.

\begin{figure}
 \begin{center}
  \includegraphics[width=0.24\linewidth]{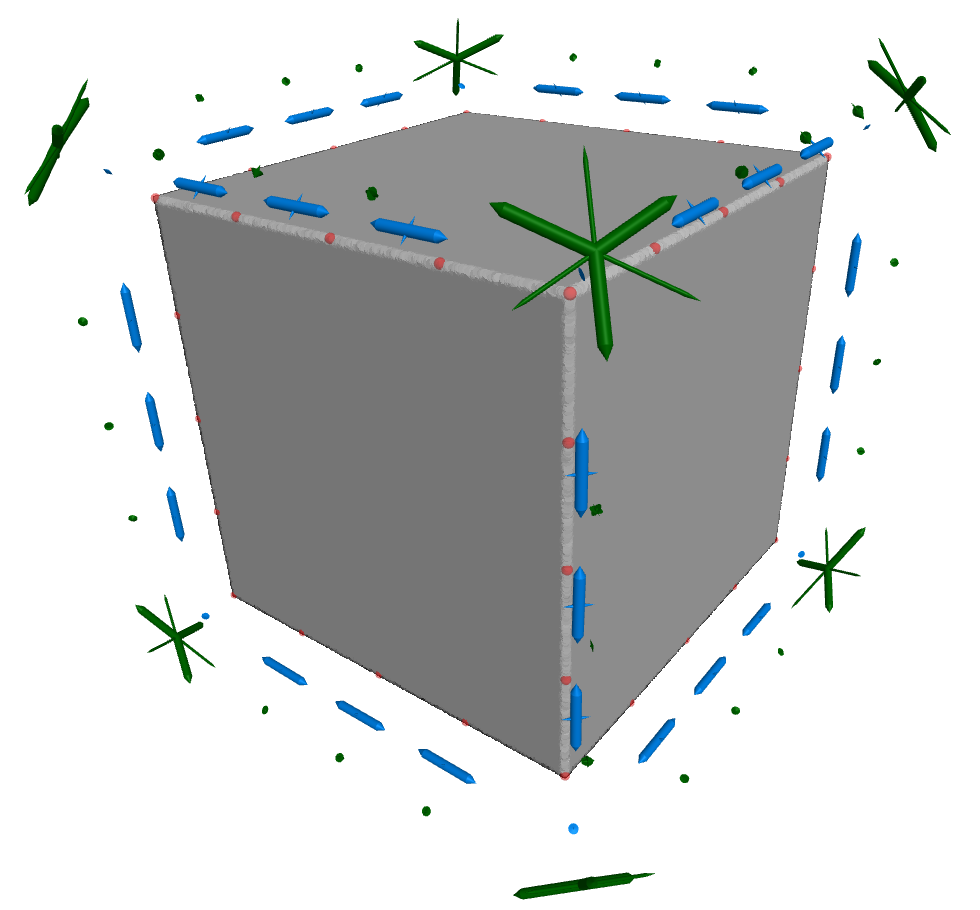}
  \includegraphics[width=0.24\linewidth]{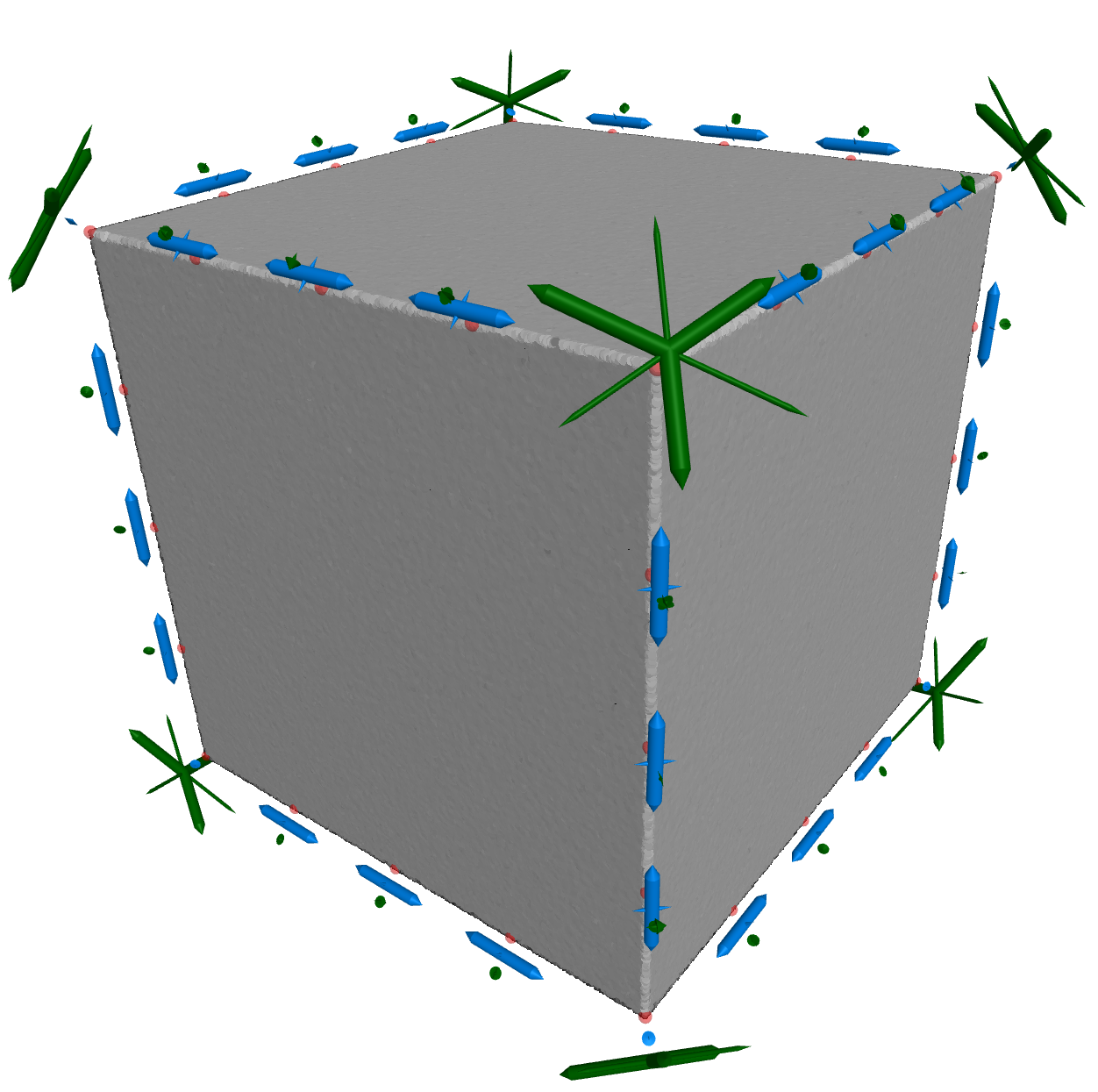}
  \includegraphics[width=0.24\linewidth]{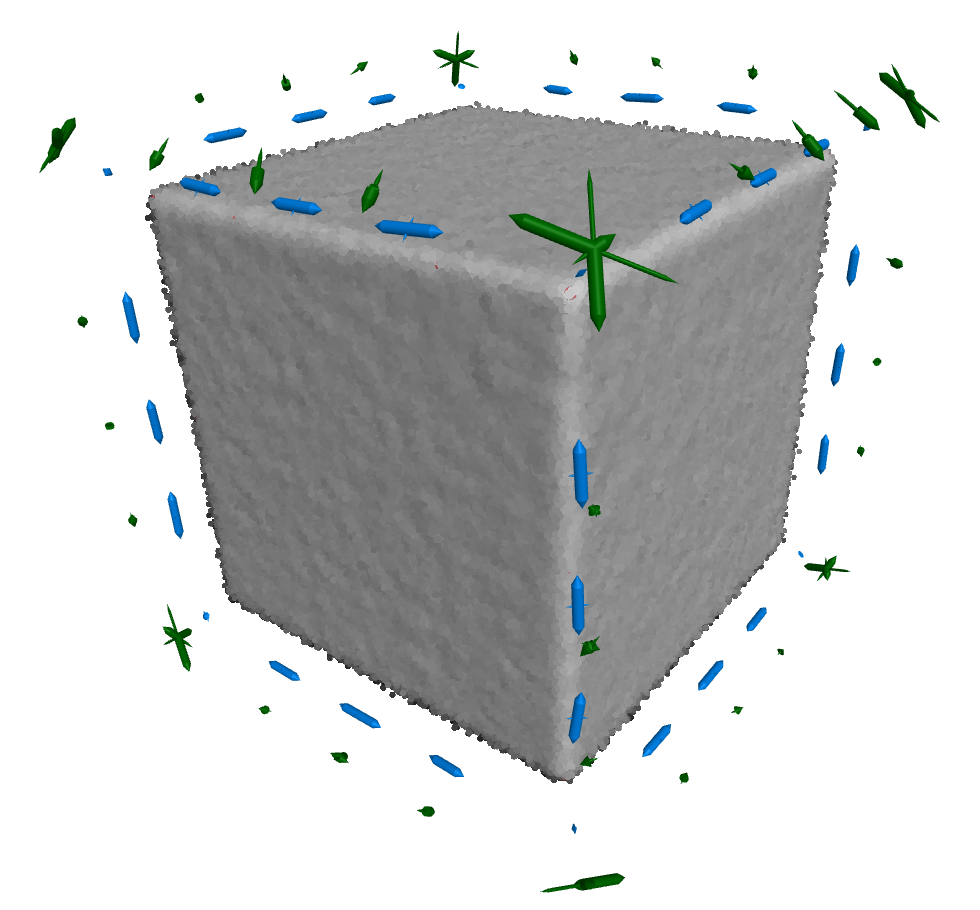}
  \includegraphics[width=0.24\linewidth]{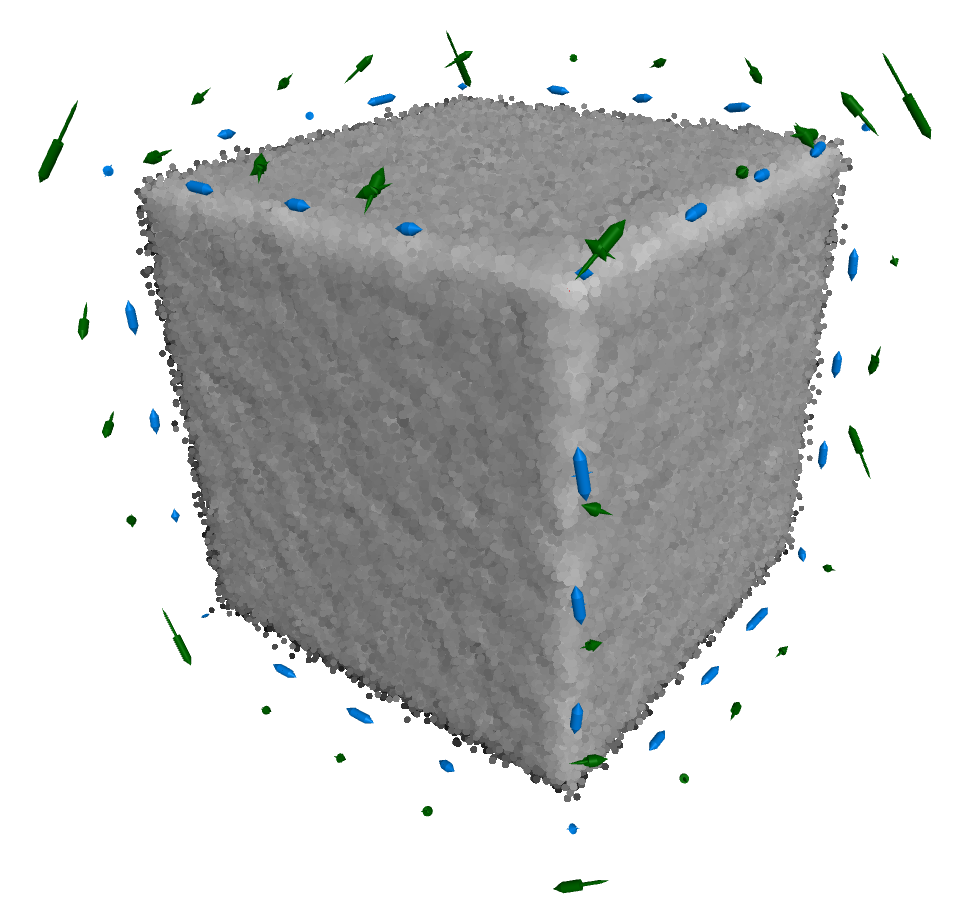}
 \end{center}
 \caption{Principal directions of order 2 and 3 computed on a cube with added Gaussian noise on the positions. From left to right: Noiseless, $\sigma=0.01\%$; $\sigma= 0.05\%$ and $\sigma=0.1\%$ (percentages are given with respect to the shape diagonal)}
\label{fig:cube_noisy}
\end{figure}

\begin{figure}[h!]
 \begin{center}
  \subcaptionbox{$r=50$}{\includegraphics[width=0.15\linewidth]{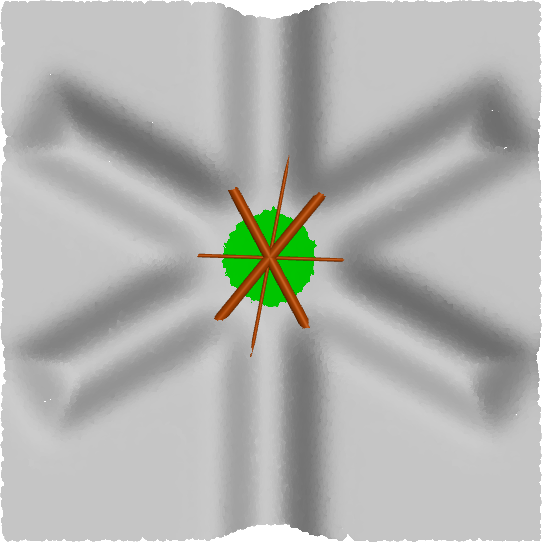}}
  \subcaptionbox{$r=80$}{\includegraphics[width=0.15\linewidth]{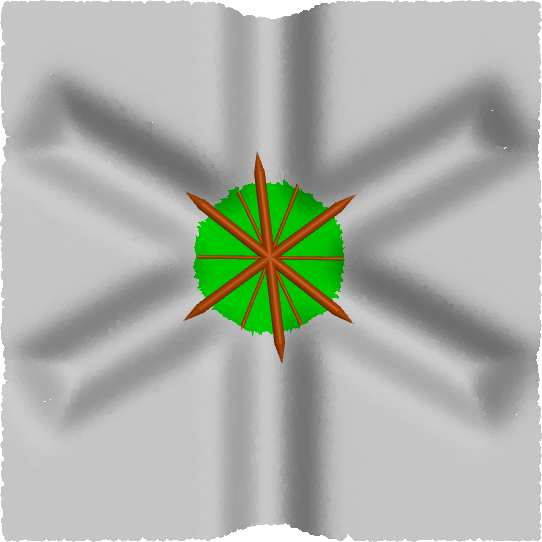}}
  \subcaptionbox{$r=100$}{\includegraphics[width=0.15\linewidth]{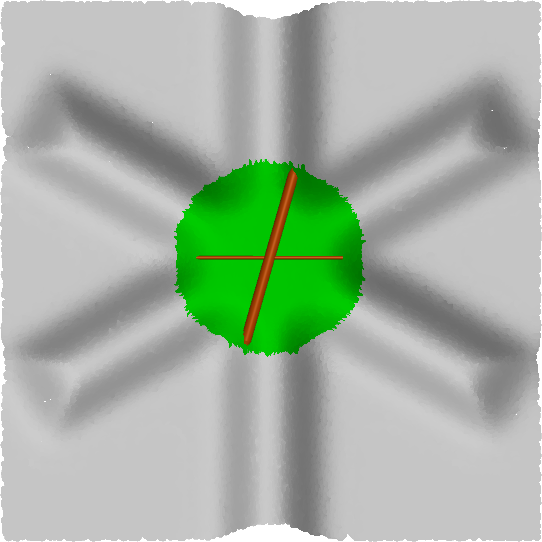}}
  \subcaptionbox{$r=200$}{\includegraphics[width=0.15\linewidth]{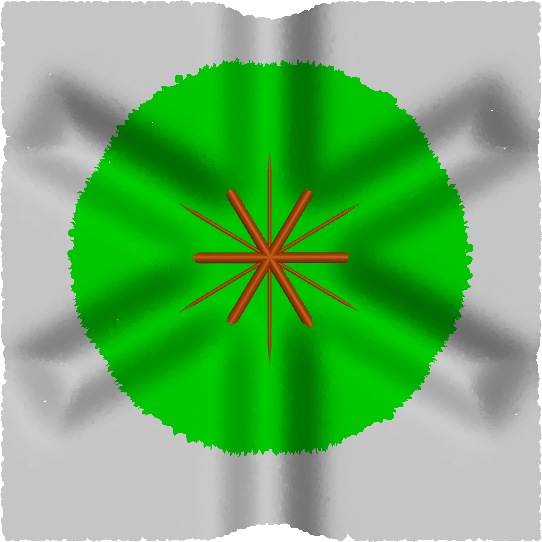}}
 \end{center}
\caption{Detection of order 6 principal directions with increasing radius (the neighborhood is shown in green).}
\label{fig:varying_radius}
\end{figure}

\section{Limitations}

Our definition of principal directions is an extension of the principal curvature directions to higher orders, and as such are continuous on \textit{generic} surfaces. For surfaces that are \textit{umbilical} at every order such as a sphere or a plane, the directions will simply vanish (since no extremum will be found). Further smoothness constraints could be set locally to enforce some consistency, however this goes beyond the scope of this paper.
Our approach also shares a limitation common to many local estimation methods: a radius should be chosen so that the analysis is local enough but also such that the neighborhood encloses enough points. The radius has indeed a direct impact on what is captured by the principal directions as illustrated on Figure \ref{fig:varying_radius}.
Importantly enough, our method does not perform better for curvature principal directions estimation than Osculating Jets~\cite{Cazals05} or APSS~\cite{Guennebaud07}. Our contribution lies rather on the extension to arbitrary orders than on the estimation accuracy itself.
Finally, while it is appealing to consider that order-$3$ principal directions capture three ridges meeting at a single point, some precautions ought to be taken: if the ridges meet at a perfect T-junction, the maximum (or minimum) principal directions will not capture the 3 ridges directions because two maximum order-$3$ principal directions cannot be opposite. Figure \ref{fig:Tjunction_order3} illustrates this behavior (see also Figure \ref{fig:inter5plane_irreg}).

\begin{figure}[h!]
 \begin{center}
  \includegraphics[width=0.4\linewidth]{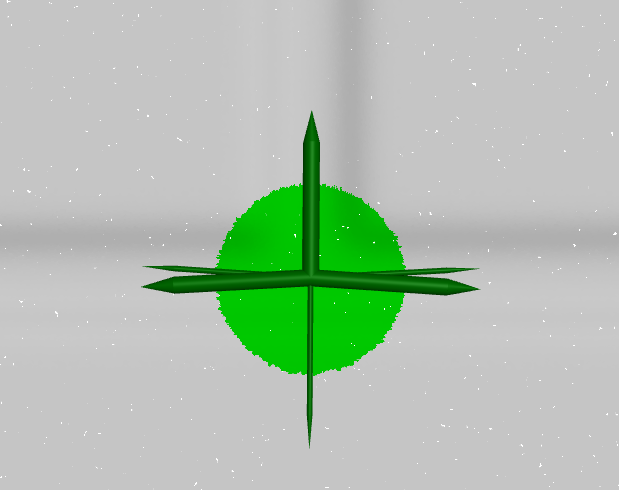}
 \end{center}
 \caption{Order 3 directions computed on a perfect T-junction. The directions can mathematically not take a perfect T shape.}
 \label{fig:Tjunction_order3}
\end{figure}

\section{Conclusion}

In this paper we introduced an extension of principal curvature directions to arbitrary differential orders and showed the link with the eigenvectors and eigenvalues of differential tensors. We showed that these new intrinsic direction fields are relevant on several shapes and can be computed efficiently even with sharp features. As a future work, some global smoothness constraints could be added, to enforce some surrogate direction computations on surfaces where the directions vanish. Many more applications of this new type of principal directions remain to be explored.


\bibliographystyle{siamplain}
\bibliography{biblio}

\end{document}